 \DeclareMathOperator{\Res}{Res}
\DeclareMathOperator{\pv}{p.v.}
\DeclareMathOperator{\Ai}{Ai}
\DeclareMathOperator{\local}{loc}
\DeclareMathOperator{\curved}{cur}
\DeclareMathOperator{\vertical}{ver}
\newcommand{\realR}{\mathbb{R}}
\renewcommand{\Re}{\mathrm{Re}\,}
\renewcommand{\Im}{\mathrm{Im}\,}
\renewcommand{\vec}{\mathbf}
\newcommand{\loc}{\mathrm{loc}}
\newcommand{\ud}{\,\mathrm{d}}
\newtheorem{thm}{Theorem}[section]
\newtheorem{cor}[thm]{Corollary}
\newtheorem{prop}[thm]{Proposition}
\theoremstyle{remark}
\numberwithin{equation}{section}
\newtheorem{remark}{\indent \textrm{Remark}}[section]
\newcommand{\bigO}{\mathcal{O}}
\newcommand{\eq}{\begin{equation}}
\newcommand{\nq}{\end{equation}}
\newcommand{\eqa}{\begin{eqnarray}}
\newcommand{\nqa}{\end{eqnarray}}
\begin{document}

\title{Local universality in biorthogonal Laguerre ensembles}
\author{Lun Zhang\footnotemark[1]}
\maketitle
\renewcommand{\thefootnote}{\fnsymbol{footnote}}
\footnotetext[1] {School of Mathematical Sciences and Shanghai Key Laboratory for Contemporary Applied Mathematics, Fudan University, Shanghai 200433, P. R. China. E-mail:
lunzhang\symbol{'100}fudan.edu.cn}

\begin{abstract}
We consider $n$ particles $0\leq x_1<x_2< \cdots < x_n < +\infty$, distributed according to a probability measure of the form
$$
\frac{1}{Z_n}\prod_{1\leq i <j \leq n}(x_j-x_i)\prod_{1\leq i <j \leq n}(x_j^{\theta}-x_i^{\theta})\prod_{j=1}^nx_j^\alpha e^{-x_j}\ud x_j, ~~ \alpha>-1,~~ \theta>0,
$$
where $Z_n$ is the normalization constant. This distribution arises in the context of modeling disordered conductors in the metallic regime, and can also be realized as the distribution for squared singular values of certain triangular random matrices. We give a double contour integral formula for the correlation kernel, which allows us to establish universality for the local statistics of the particles, namely, the bulk universality and the soft edge universality via the sine kernel and the Airy kernel, respectively. In particular, our analysis also leads to new double contour integral representations of scaling limits at the origin (hard edge), which are equivalent to those found in the classical work of Borodin. We conclude this paper by relating the correlation kernels to those appearing in recent studies of products of $M$ Ginibre matrices for the special cases $\theta=M\in\mathbb{N}$.
\end{abstract}


\section{Introduction and statement of the main results} \label{sec:Intro}

\subsection{Biorthogonal Laguerre ensembles}
The biorthogonal Laguerre ensembles refer to  $n$ particles $x_1< \cdots < x_n$ distributed over the positive real axis, following a probability density function of the form
\begin{equation}\label{eq:bioLag}
\frac{1}{Z_n}\Delta(x_1,\ldots,x_n)\Delta(x_1^\theta,\ldots,x_n^\theta)\prod_{j=1}^nx_j^\alpha e^{-x_j}, \qquad \alpha>-1, \qquad \theta >0,
\end{equation}
where
$$
Z_n=Z_n(\alpha,\theta)=\int_{[0,\infty)^n}\Delta(x_1,\ldots,x_n)\Delta(x_1^\theta,\ldots,x_n^\theta)\prod_{j=1}^nx_j^\alpha e^{-x_j}\ud x_j
$$
is the normalization constant, and
$$
\Delta(\lambda_1,\ldots,\lambda_n)=\prod_{1\leq i <j \leq n}(\lambda_j-\lambda_i)
$$
is the standard Vandermonde determinant.

Densities of the form \eqref{eq:bioLag} were first introduced by Muttalib \cite{Mut95}, where he pointed out that, due to the appearance of two body interaction term $\Delta(x_1,\ldots,x_n)\Delta(x_1^\theta,\ldots,x_n^\theta)$, these ensembles provide more effective description of disordered conductors in the metallic regime than the classical random matrix theory. A more concrete physical example that leads to \eqref{eq:bioLag} (with $\theta=2$) can be found in \cite{LSZ}, where the authors proposed a random matrix model for disordered bosons. These ensembles are further studied by Borodin \cite{Borodin99} under a more general framework, namely, biorthogonal ensembles. It is also worthwhile to mention the work of Cheliotis \cite{Cheliotis14}, where the author constructed certain triangular random matrices in terms of a Wishart matrix whose squared singular values are distributed according to \eqref{eq:bioLag}; see also \cite{Forrester-Wang15}. Note that when $\theta=1$, \eqref{eq:bioLag} reduces to the well-known Wishart-Laguerre unitary ensemble and plays a fundamental role in random matrix theory; cf. \cite{Anderson-Guionnet-Zeitouni10,Forrester10}.

A nice property of \eqref{eq:bioLag} is that, as proved in \cite{Mut95}, they form the so-called determinantal point processes \cite{Johansson06,Soshnikov00}. This means there exits a correlation kernel $K_n^{(\alpha, \theta)}(x,y)$ such that the joint probability density functions \eqref{eq:bioLag} can be rewritten as the following determinantal forms
\begin{equation*}
\frac{1}{n!}\det\left(K_n^{(\alpha,\theta)}(x_i,x_j)\right)_{i,j=1}^n.
\end{equation*}
The kernel $K_n^{(\alpha, \theta)}(x,y)$ has a representation in terms of the so-called biorthogonal polynomials (cf. \cite{Kon65} for a definition). Let
\begin{equation}
p_j^{(\alpha,\theta)}(x)=\kappa_j x^j+\ldots, \qquad q_k^{(\alpha,\theta)}(x)= x^k+\ldots, \quad \kappa_j > 0,
\end{equation}
be two sequences of polynomials depending on the parameters $\alpha$ and $\theta$, of degree $j$ and $k$ respectively, and they satisfy the orthogonality conditions
\begin{equation}\label{eq:bioOP}
\int_0^\infty p_j^{(\alpha,\theta)}(x)q_k^{(\alpha,\theta)}(x^\theta)x^\alpha e^{-x}\ud x=\delta_{j,k}, \qquad j,k=0,1,2,\ldots.
\end{equation}
Note that the polynomial $q_k^{(\alpha, \theta)}$ is normalized to be monic. We then have
\begin{equation}\label{eq:kerBio}
K_n^{(\alpha, \theta)}(x,y)=\sum_{j=0}^{n-1}p_j^{(\alpha, \theta)}(x)q_j^{(\alpha, \theta)}(y^\theta )
x^\alpha e^{-x}.
\end{equation}

The families $\{p_j,j=0,1,\ldots\}$ and $\{q_k,k=0,1,\ldots\}$, which are called Laguerre biorthogonal polynomials,
exist uniquely, since the associated bimoment matrix is nonsingular; see \eqref{def:bimom} and \eqref{eq:detBiMon} below.
The studies of these polynomials (with $\theta=2$) might be traced back to \cite{SF51} during the investigations of
penetration and diffusion of x-rays through matter. Later, intensive studies have been conducted on the case $\theta\in\mathbb{N}=\{1,2,\ldots\}$ in \cite{Carl,GC69a,GC69b,Kon67,Pra70,Pre62,Sri73}, where the general properties including explicit formulas, recurrence relations, generating functions, Rodrigues's formulas etc. are derived.

As determinantal point processes, a fundamental issue of the study is to establish the large $n$ limit of the correlation kernel \eqref{eq:kerBio} in both macroscopic and microscopic regimes. By expressing $K_n^{(\alpha,\theta)}(x,y)$ as a finite series expansion in terms of $x^{k\theta}y^r$, $k,r=0,1,\ldots,n-1$, it was shown by Borodin \cite[Theorem 4.2]{Borodin99} that
\begin{align}\label{eq:BorHard}
\lim_{n \to \infty}\frac{K_n^{(\alpha, \theta)}\left(\frac{x}{n^{1/\theta}},\frac{y}{n^{1/\theta }}\right)}{n^{1/\theta}}
&=\sum_{k,l=0}^{\infty}\frac{(-1)^kx^{\alpha+k}}{k!\Gamma\left(\frac{\alpha+1+k}{\theta}\right)}\frac{(-1)^ly^{\theta l}}{l!\Gamma(\alpha+1+\theta l)}\frac{\theta}{\alpha+1+k+\theta l}
\nonumber \\
&=\theta x^{\alpha}\int_0^1J_{\frac{\alpha+1}{\theta},\frac{1}{\theta}}(ux)J_{\alpha+1,\theta}((uy)^{\theta})u^\alpha\ud u,
\end{align}
where $J_{a,b}$ is Wright's generalization of the Bessel function \cite{Er53} given by
\begin{equation}\label{eq:Wright}
J_{a,b}(x)=\sum_{j=0}^\infty\frac{(-x)^j}{j!\Gamma(a+bj)}; \end{equation}
see also \cite{LSZ} for the special case $\theta=2$, $\alpha \in \mathbb{N}\cup \{0\}$.
These non-symmetric hard edge scaling limits generalize the classical Bessel kernels \cite{Forrester93,TW94} (corresponding to $\theta=1$), and possess some nice symmetry properties. Moreover, they also appear in the studies of large $n$ limits of correlation kernels for biorthogonal Jacobi and biorthogonal Hermite ensembles \cite{Borodin99}. When $\theta=M\in\mathbb{N}$ or $1/\theta = M$, the limiting kernels coincide with the hard edge scaling limits of specified parameters arising from products of $M$ Ginibre matrices \cite{Kuijlaars-Zhang14}, as shown in \cite{Kuijlaars-Stivigny14}.

The macroscopic behavior of the particles as $n\to\infty$ has recently been investigated in \cite{CR14}, where the expressions for the associated equilibrium measures are given for quite general potentials and $\theta\geq 1$. According to \cite{CR14}, as $n\to \infty$, the (rescaled) particles in \eqref{eq:bioLag} are distributed over a finite interval $[0,(1+\theta)^{1+1/\theta}]$, with the density function given by
\begin{equation}
f_\theta(x)= \frac{\theta}{2 \pi x i }(I_+(x)-I_-(x)), \qquad x\in(0,(1+\theta)^{1+1/\theta}).
\end{equation}
Here, $I_{\pm}(x)$ (with $\Im (I_+(x))>0$) stand for two complex conjugate solutions of the equation
$$J(z)=\theta (z+1)\left(\frac{z+1}{z}\right)^{1/\theta}=x, \qquad x\in(0,(1+\theta)^{1+1/\theta}).$$
Moreover, by \cite[Remark 1.9]{CR14}, the density blows up with a rate $x^{-1/(1+\theta)}$ near the origin (hard edge), while vanishes as a square root near $(1+\theta)^{1+1/\theta}$ (soft edge). This phenomenon in particular suggests non-trivial hard edge scaling limits (as shown in \eqref{eq:BorHard}), as well as the expectation that the classical bulk and soft edge universality \cite{Kuijlaars11} (via the sine kernel and Airy kernel, respectively) should hold in the bulk and the right edge as in the case of $\theta=1$. More explicit description is revealed later in \cite{Forrester-Liu14}. After changing variables $x_i \to \theta x_i^{1/\theta}$, the (rescaled) particles are then distributed over $\left[0,(1+\theta)^{1+\theta}/\theta^\theta \right]$ and the limiting mean distribution is recognized as the Fuss-Catalan distribution \cite{Alexeev-Gotze-Tikhomirov10, Banica-Belinschi-Capitaine-Collins11, Nica-Speicher06}. Its $k$-th moment is given by the Fuss-Catalan number
\begin{equation}
\frac{1}{(1+\theta)k+1}\binom{(1+\theta)k+k}{k},\qquad k=0,1,2,\ldots.
\end{equation}
The density function of Fuss-Catalan distribution can be written down explicitly in several ways; cf. \cite{Penson-Zyczkowski11} in terms of Meijer G-functions (see e.g.\ \cite{Beals-Szmigielski13, Luke, DLMF} and
the Appendix below for a brief introduction) or \cite{Liu-Song-Wang11} in terms of multivariate integrals. The simplest form of the representation for general $\theta$ might follow from the following parametrization of the argument \cite{Biane98,Forrester-Liu14,Haagerup-Moller13,Neuschel14}:
\begin{equation}\label{eq:para x}
x=\frac{(\sin ((1+\theta)\varphi))^{1+\theta}}{\sin\varphi (\sin( \theta \varphi))^\theta}, \qquad 0<\varphi<\frac{\pi}{1+\theta}.
\end{equation}
It is readily seen that this parametrization is a strictly decreasing function of $\varphi$, thus gives a  one-to-one mapping from $(0, \pi/(1+\theta))$ to $(0, (1 + \theta)^{1 + \theta}/\theta^\theta)$. The density function in terms of $\varphi$ is then given by
\begin{equation}\label{eq:density}
  \rho(\varphi) = \frac{1}{\pi x} \frac{\sin((1+\theta)\varphi)}{\sin(\theta\varphi)} \sin\varphi = \frac{1}{\pi}\frac{(\sin\varphi)^2(\sin (\theta\varphi))^{\theta-1}}{ (\sin ((1+\theta)\varphi))^\theta},\qquad 0<\varphi<\frac{\pi}{1+\theta}.
\end{equation}
From \eqref{eq:para x} and \eqref{eq:density}, one can check directly that $\rho$ blows up with a rate $x^{-\theta/(1+\theta)}$ near the origin, and vanishes as a square root near $(1+\theta)^{1+\theta}/\theta^\theta$, which is compatible with the changes of variables. We finally note that the other description of macroscopic behavior with the notion of a DT-element \cite{DH04} can be found in \cite{Cheliotis14}.

The main aim of this paper to establish local universality for biorthogonal Laguerre ensembles \eqref{eq:bioLag}. Due to lack of a simple Christoffel-Darboux formula for Laguerre biorthogonal polynomials, we have to adapt an approach that is different from the conventional one. The main issue here is an explicit integral representation of $K_n^{(\alpha,\theta)}$. Our main results are stated in the next section.

\subsection{Statement of the main results}
Our first result is stated as follows:
\begin{thm}[Double contour integral representation of $K_n^{(\alpha,\theta)}$]\label{thm:KnInt}
With $K_n^{(\alpha,\theta)}$ defined in \eqref{eq:kerBio}, we have
\begin{align}\label{eq:Kn_double_con}
     K_n^{(\alpha, \theta)}(x,y) =  \frac{\theta }{(2\pi i)^2} \int_{c-i\infty}^{c+i\infty} \ud s \oint_{\Sigma}  \ud t
   \frac{\Gamma(s+1)\Gamma(\alpha+1+\theta s )}{\Gamma(t+1)\Gamma(\alpha+1+\theta t)}
            \frac{\Gamma(t-n+1)}{\Gamma(s-n+1)}       \frac{x^{-\theta s -1}y^{\theta t}}{s-t},
    \end{align}
for $x,y>0$, where
\begin{equation}\label{eq:defc}
c=\frac{\max\{0,1-\frac{\alpha+1}{\theta}\}-1}{2}<0,
\end{equation} and $\Sigma$ is a closed contour going around $0, 1, \ldots, n-1$ in
the positive direction and  $\Re t > c$ for $t \in \Sigma$.
\end{thm}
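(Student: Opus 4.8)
The plan is to start from the representation \eqref{eq:kerBio}, $K_n^{(\alpha,\theta)}(x,y)=\sum_{j=0}^{n-1}p_j^{(\alpha,\theta)}(x)q_j^{(\alpha,\theta)}(y^\theta)x^\alpha e^{-x}$, and to obtain explicit (Mellin–Barnes type) integral formulas for the Laguerre biorthogonal polynomials $p_j$ and $q_j$ individually. First I would determine $q_k^{(\alpha,\theta)}$: writing $q_k(x)=\sum_{r=0}^k c_{k,r}x^r$ and imposing the orthogonality \eqref{eq:bioOP} against $p_j$, one reduces to linear algebra with the bimoment matrix $\bigl(\int_0^\infty x^{r+\theta s}x^\alpha e^{-x}\ud x\bigr)_{r,s}=\bigl(\Gamma(r+\theta s+\alpha+1)\bigr)_{r,s}$; since the $p_j$ span polynomials of degree $<k$, $q_k(x^\theta)$ must be orthogonal (with weight $x^\alpha e^{-x}$) to $1,x,\dots,x^{k-1}$, which pins down the coefficients $c_{k,r}$ as an explicit ratio of Gamma functions (this is essentially the classical formula from \cite{Carl,Kon67} for $\theta\in\mathbb N$, but the derivation goes through for all $\theta>0$). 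Dually, the $p_j$ are obtained by expanding in the basis $q_k(x^\theta)$ and using the same bimoment matrix. The upshot is closed-form expressions such as
\begin{equation*}
q_k^{(\alpha,\theta)}(y)=\sum_{r=0}^k\binom{k}{r}\frac{\Gamma(\alpha+1+\theta k)}{\Gamma(\alpha+1+\theta r)}\frac{(-1)^{k-r}\,k!}{r!}\,\frac{y^r}{k!}\quad(\text{up to normalization}),
\end{equation*}
together with a similar double-indexed formula for $p_j^{(\alpha,\theta)}(x)x^\alpha e^{-x}$ with the roles of the two Gamma factors interchanged; in particular $p_j x^\alpha e^{-x}$ should come out as a Meijer $G$-function, consistent with the product-of-Ginibre literature alluded to in the introduction.

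Next I would substitute these two finite sums into \eqref{eq:kerBio} and carry out the sum over $j$ from $0$ to $n-1$. The key algebraic trick is to convert the finite binomial sums into contour integrals: a factor $\binom{k}{r}$ or $1/\Gamma(s-n+1)$ can be produced by a residue calculus. Concretely, $\oint \frac{\Gamma(t-n+1)}{\Gamma(t+1)}\,(\cdots)\,\ud t$ around $0,1,\dots,n-1$ reproduces, by the residue theorem, the finite sum over $j$ with the correct ratios of factorials, because $\Gamma(t-n+1)$ has simple poles exactly at $t=0,1,\dots,n-1$ with residues proportional to $(-1)^{n-1-j}/(n-1-j)!$. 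Similarly the $s$-integral along the vertical line $\Re s=c$ encodes (via a Mellin inversion / interchange of summation) the expansion of $p_j(x)x^\alpha e^{-x}$ in powers of $x^{-\theta s-1}$. The geometric-type denominator $1/(s-t)$ arises from summing the geometric-like series $\sum_{j\ge 0}(\text{stuff})^j$ that appears once the two polynomial families are multiplied; the truncation at $j=n-1$ is precisely what the contour $\Sigma$ enforces. Matching the Gamma factors $\Gamma(s+1)\Gamma(\alpha+1+\theta s)$ in the numerator and $\Gamma(t+1)\Gamma(\alpha+1+\theta t)$ in the denominator against the coefficients from $p_j$ and $q_j$ respectively, and $\Gamma(t-n+1)/\Gamma(s-n+1)$ against the truncation, should deliver \eqref{eq:Kn_double_con} after a short simplification.

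The main obstacle, and the place requiring genuine care, is the rigorous justification of the interchange of the finite sum over $j$ with the two integrals and of the contour deformations: one must check absolute convergence of the double integral in \eqref{eq:Kn_double_con}, which forces the precise placement of the vertical line at $\Re s=c$ given by \eqref{eq:defc} — this value is dictated by the decay of $\Gamma(s+1)\Gamma(\alpha+1+\theta s)x^{-\theta s-1}$ as $\Im s\to\pm\infty$ (via Stirling) together with the requirement $\Re s<\Re t$ so that the pole at $s=t$ does not obstruct the contour, and by the need for the integrand to be integrable near the left end. A clean way to organize the verification is to first prove the identity for $x,y$ in a region where everything converges transparently (e.g.\ $x$ large, $y$ small, or by first establishing it as an identity of formal/meromorphic objects and then invoking analyticity in $x,y$), and only afterwards extend to all $x,y>0$ by analytic continuation. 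I would also double-check the orientation and the requirement $\Re t>c$ for $t\in\Sigma$, since the contour $\Sigma$ must simultaneously enclose $0,1,\dots,n-1$ and stay to the right of the $s$-line; once the bimoment-matrix computation of $p_j,q_j$ is in hand, the rest is bookkeeping with residues and Stirling estimates.
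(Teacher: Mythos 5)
Your overall strategy matches the paper's: obtain explicit contour-integral representations for $q_k^{(\alpha,\theta)}$ (a closed loop around $0,\ldots,k$) and for $x^\alpha e^{-x}p_k^{(\alpha,\theta)}(x)$ (a Mellin--Barnes line integral), substitute both into \eqref{eq:kerBio}, and then perform the sum over $k$. The paper indeed derives the explicit coefficient formula for $q_k$ from the bimoment determinant \eqref{eq:qalpha}, and verifies the Mellin--Barnes formula for $p_k$ by checking biorthogonality directly, so up to this point you are on the right track.

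The genuine gap is in how the sum over $k$ is carried out. After substituting the two integral representations, one is left with the double integral of $\sum_{k=0}^{n-1}\frac{\Gamma(t-k)}{\Gamma(s-k)}$ against the remaining Gamma factors (see \eqref{Knintegral0}). You describe the factor $\frac{1}{s-t}$ as arising from ``summing the geometric-like series $\sum_{j\ge 0}(\text{stuff})^j$,'' but this sum is not geometric and no such simple closed form exists. The crucial identity is the telescoping sum
\begin{equation*}
(s-t-1)\sum_{k=0}^{n-1}\frac{\Gamma(t-k)}{\Gamma(s-k)}=\frac{\Gamma(t-n+1)}{\Gamma(s-n)}-\frac{\Gamma(t+1)}{\Gamma(s)},
\end{equation*}
which follows from the functional equation $\Gamma(z+1)=z\Gamma(z)$ written in the form $(s-t-1)\frac{\Gamma(t-k)}{\Gamma(s-k)}=\frac{\Gamma(t-k)}{\Gamma(s-k-1)}-\frac{\Gamma(t-k+1)}{\Gamma(s-k)}$. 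Without this identity you cannot produce the $\frac{\Gamma(t-n+1)}{\Gamma(s-n+1)}\cdot\frac{1}{s-t}$ structure; it is precisely here that the $\frac{1}{s-t}$ denominator and the truncation at $n-1$ are created, and the second term that arises vanishes by Cauchy's theorem because the integrand has no poles inside $\Sigma$. One then shifts $s\mapsto s+1$ to arrive at \eqref{eq:Kn_double_con}. Your framework is correct, but absent the telescoping identity the pivotal step is unjustified; simply invoking a geometric-series intuition would not close the argument. A smaller imprecision worth flagging: $\Gamma(t-n+1)$ alone has poles at \emph{all} integers $\le n-1$; it is only the combined ratio $\Gamma(t-n+1)/\Gamma(t+1)$ (after the division by $\Gamma(t+1)$) that has poles exactly at $t=0,1,\ldots,n-1$.
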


We highlight that this contour integral representation bears a resemblance to those appearing recently in the studies of  products of random matrices \cite{Forrester14,KKS15,Kuijlaars-Stivigny14,Kuijlaars-Zhang14}, where the integrands of double contour integral representations for the correlation kernels again consist of ratios of gamma functions. When $\theta\in\mathbb{N}$, $K_n^{(\alpha,\theta)}$ is indeed related to certain correlation kernels arising from products of Ginibre matrices; see Section \ref{sec:integercase} below. We also note that, in the context of products of random matrices, the correlation kernels can be written as integrals involving Meijer G-functions, for biorthogonal Laguerre ensembles, however, it does not seem to be the case for general parameters $\alpha$ and $\theta$.

An immediate consequence of the above theorem is the following new representations of hard edge scaling limits.

\begin{cor}[Hard edge scaling limits of $K_n^{(\alpha,\theta)}$]\label{thm:hardedge}
With $\alpha \geq -1$, $\theta \geq 1$ being fixed, we have
\begin{equation}\label{eq:ZhangHard}
 \lim_{n \to \infty}\frac{K_n^{(\alpha, \theta)}\left(\frac{x}{n^{1/\theta}},\frac{y}{n^{1/\theta}}\right)}{n^{1/\theta}} = K^{(\alpha,\theta)}(x,y),
 \end{equation}
uniformly for $x,y$ in compact subsets of the positive real axis, where
\begin{equation}\label{eq:hardedgelim}
K^{(\alpha,\theta)}(x,y)
=\frac{\theta }{(2\pi i)^2} \int_{c-i\infty}^{c+i\infty} \ud s \oint_{\Sigma}  \ud t
   \frac{\Gamma(s+1)\Gamma(\alpha+1+\theta s )}{\Gamma(t+1)\Gamma(\alpha+1+\theta t)}
            \frac{\sin\pi s}{\sin \pi t}       \frac{x^{-\theta s -1}y^{\theta t }}{s-t}
\end{equation}
and where $c$ is given in \eqref{eq:defc}, $\Sigma$ is a contour starting from $+\infty$ in the upper
half plane and returning to $+\infty$ in the lower half plane which
encircles the positive real axis and $\Re t>c$ for $t\in\Sigma$. Alternatively, by setting
\begin{equation}\label{def:p}
p^{(\alpha,\theta)}(x)=\frac{1}{2 \pi i}\int_{\kappa-i\infty}^{\kappa+i\infty} \frac{\Gamma(\alpha+s)}{\Gamma((1-s)/\theta)}x^{-s} \ud s, \quad \kappa>-\alpha,
\end{equation}
and
\begin{equation}\label{def:q}
q^{(\alpha,\theta)}(x)=\frac{1}{2 \pi i}\int_\gamma \frac{\Gamma(t/\theta)}{\Gamma(\alpha+1-t)}x^{-t} \ud t,
\end{equation}
where $\gamma$ is a loop starting from $-\infty$ in the lower
half plane and returning to $-\infty$ in the upper half plane which
encircles the negative real axis, we have
\begin{equation}\label{eq:hardedgelim2}
K^{(\alpha,\theta)}(x,y)=\int_0^1p^{(\alpha,\theta)}(ux)q^{(\alpha,\theta)}(uy)\ud u.
\end{equation}
\end{cor}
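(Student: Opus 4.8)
The plan is to obtain Corollary~\ref{thm:hardedge} directly from the double contour integral representation \eqref{eq:Kn_double_con} of Theorem~\ref{thm:KnInt} by performing the hard edge scaling $x\mapsto x/n^{1/\theta}$, $y\mapsto y/n^{1/\theta}$ and letting $n\to\infty$. After this substitution the integrand of \eqref{eq:Kn_double_con} becomes
\[
\frac{\theta}{(2\pi i)^2}\,\frac{\Gamma(s+1)\Gamma(\alpha+1+\theta s)}{\Gamma(t+1)\Gamma(\alpha+1+\theta t)}\,
\frac{\Gamma(t-n+1)}{\Gamma(s-n+1)}\,n^{s-t}\,\frac{x^{-\theta s-1}y^{\theta t}}{s-t},
\]
where the only $n$-dependence sits in the factor $\Gamma(t-n+1)\,n^{s-t}/\Gamma(s-n+1)$. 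The first step is to show that, as $n\to\infty$, this ratio tends to $1$ pointwise on the contours; this is the standard asymptotic relation $\Gamma(z+a)/\Gamma(z+b)\sim z^{a-b}$ applied with $z=-n$, i.e. $\Gamma(t-n+1)/\Gamma(s-n+1)\sim(-n)^{t-s}$, combined with the reflection formula $\Gamma(t-n+1)\Gamma(n-t)=\pi/\sin\pi(t-n)=(-1)^n\pi/\sin\pi t$ to rewrite the ratio as $\bigl(\sin\pi s/\sin\pi t\bigr)\,\Gamma(n-s)\,n^{s-t}/\Gamma(n-t)$, which converges to $\sin\pi s/\sin\pi t$. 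This explains the replacement of $\Gamma(t-n+1)/\Gamma(s-n+1)$ by $\sin\pi s/\sin\pi t$ in \eqref{eq:hardedgelim}. Simultaneously I would deform the closed contour $\Sigma$ around $\{0,1,\dots,n-1\}$ into the fixed unbounded Hankel-type contour described in the corollary, encircling all of $[0,\infty)$; the zeros of $\sin\pi t$ at the nonnegative integers reproduce, via residues, exactly the poles that $\Gamma(t-n+1)/\Gamma(s-n+1)$ had at $t=0,1,\dots,n-1$, so nothing is lost in the limit.

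The second step is to justify the interchange of limit and integration. The $s$-contour is the fixed vertical line $\Re s=c<0$, and the $t$-contour, once deformed, can be taken to run along rays into $\Re t\to+\infty$. On these contours one needs a dominated-convergence bound uniform in $n$: along the vertical $s$-line, Stirling's formula gives exponential decay of $\Gamma(s+1)\Gamma(\alpha+1+\theta s)$ in $|\Im s|$ (the combined "gamma weight" decays like $e^{-(1+\theta)(\pi/2)|\Im s|}$ up to polynomial factors), while $x^{-\theta s-1}$ is bounded on the line; along the $t$-rays toward $+\infty$ the factor $\Gamma(t+1)\Gamma(\alpha+1+\theta t)$ in the denominator grows super-exponentially, which kills $y^{\theta t}$ and the ratio of the $n$-dependent gammas (the latter being uniformly bounded for $\Re t$ large by the same Stirling estimates). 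Choosing the $t$-contour to hug the real axis closely enough near $[0,\infty)$ and to open up into the rays, one gets an $n$-independent integrable majorant, and the uniformity in $x,y$ over compact subsets of $(0,\infty)$ is immediate since those variables enter only through $x^{-\theta s-1}$ and $y^{\theta t}$.

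The third step is the alternative factored representation \eqref{eq:hardedgelim2}. Starting from \eqref{eq:hardedgelim}, I would write the elementary identity $\frac{1}{s-t}=\int_0^1 u^{s-t-1}\,\ud u$ (valid since $\Re(s-t)<0$ on the contours, because $\Re s=c<\Re t$), substitute it into \eqref{eq:hardedgelim}, and interchange the $u$-integral with the $s$- and $t$-integrals; the resulting integrand factorizes as $\bigl[\tfrac{1}{2\pi i}\int \Gamma(s+1)\Gamma(\alpha+1+\theta s)\sin\pi s\,(ux)^{-\theta s-1}\,\ud s\bigr]\times\bigl[\tfrac{1}{2\pi i}\oint \frac{\sin\pi t}{\Gamma(t+1)\Gamma(\alpha+1+\theta t)}(uy)^{\theta t}\,\ud t\bigr]$. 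Using $\Gamma(z+1)\sin\pi z=\pi/\Gamma(-z)$ on the numerator piece and $\sin\pi t/\Gamma(t+1)=\pi/\bigl(\Gamma(-t)\Gamma(t+1)\Gamma(-t)\bigr)$... more cleanly, applying the reflection formula in the forms $\Gamma(s+1)\sin\pi s=\pi/\Gamma(-s)$ and $\sin\pi t=\pi/(\Gamma(t)\Gamma(1-t))$, together with the substitutions $s\mapsto$ suitable shift to match $\Gamma(\alpha+s)/\Gamma((1-s)/\theta)$ and likewise for $t$, one recognizes the two bracketed Mellin--Barnes integrals as $p^{(\alpha,\theta)}(ux)$ and $q^{(\alpha,\theta)}(uy)$ of \eqref{def:p}--\eqref{def:q} after the change of variable $\theta s+1\mapsto s$ (and reversing the $t$-contour orientation turns the Hankel loop around $[0,\infty)$ into the loop $\gamma$ around $(-\infty,0]$). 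I would also remark that $p^{(\alpha,\theta)}$ and $q^{(\alpha,\theta)}$ can be expanded as the series $\sum_k \frac{(-1)^k x^{\cdot}}{k!\,\Gamma(\cdot)}$ by summing residues, which links \eqref{eq:hardedgelim2} back to Borodin's formula \eqref{eq:BorHard} and confirms the two expressions agree.

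The main obstacle I anticipate is \textbf{not} the pointwise limit—that is routine Stirling—but rather the \emph{uniform} control needed for dominated convergence together with the contour deformation of $\Sigma$: one must simultaneously let $\Sigma$ grow to an unbounded contour while $n\to\infty$, and verify that the tails of the new unbounded $t$-contour contribute negligibly and that the $n$-dependent gamma ratio stays bounded there uniformly in $n$. Getting an honest $n$-independent majorant that is integrable over the product of an unbounded vertical line and an unbounded Hankel contour, while keeping $\Re s<\Re t$ so that the $\tfrac{1}{s-t}$ and $u$-integral manipulations remain valid, is the delicate bookkeeping; everything else (the reflection-formula rewriting, the $u$-integral factorization, matching to \eqref{def:p}--\eqref{def:q}) is formal algebra once the analytic estimates are in place.
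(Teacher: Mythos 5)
Your plan mirrors the paper's own proof: rewrite $\Gamma(t-n+1)/\Gamma(s-n+1)$ via the reflection formula as $\frac{\Gamma(n-s)}{\Gamma(n-t)}\frac{\sin\pi s}{\sin\pi t}$, use $\Gamma(n-s)/\Gamma(n-t)=n^{t-s}(1+O(n^{-1}))$ to kill the $n^{s-t}$ coming from the rescaling, deform $\Sigma$ to the unbounded Hankel contour, invoke dominated convergence, and finally factor through a $u$-integral. The overall structure is exactly right.

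However, there is a concrete error in the third step. You write $\frac{1}{s-t}=\int_0^1 u^{s-t-1}\ud u$ and claim this is ``valid since $\Re(s-t)<0$.'' That is backwards: $\int_0^1 u^{s-t-1}\ud u$ converges at $u=0$ precisely when $\Re(s-t)>0$, so the identity as stated is false on your contours where $\Re s=c<\Re t$. Moreover, even setting convergence aside, inserting $u^{s-t-1}$ into the integrand does not reproduce the factors $(ux)^{-\theta s-1}(uy)^{\theta t}$ in the two brackets you display, since those require the power $u^{-\theta s-1+\theta t}=u^{\theta(t-s)-1}$, not $u^{s-t-1}$; your two brackets also silently drop the $\theta$'s from the prefactor. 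The correct identity, which is what the paper uses and which does converge when $\Re(s-t)<0$, is
\begin{equation*}
\frac{x^{-\theta s-1}\,y^{\theta t}}{s-t}\;=\;-\theta\int_0^1 (ux)^{-\theta s-1}(uy)^{\theta t}\,\ud u.
\end{equation*}
With this substitution the double integral factors as $-\int_0^1\bigl(\tfrac{\theta}{2\pi i}\int_{c-i\infty}^{c+i\infty}\tfrac{\Gamma(\alpha+1+\theta s)}{\Gamma(-s)}(ux)^{-\theta s-1}\ud s\bigr)\bigl(\tfrac{\theta}{2\pi i}\int_{\Sigma}\tfrac{\Gamma(-t)}{\Gamma(\alpha+1+\theta t)}(uy)^{\theta t}\ud t\bigr)\ud u$, using the single reflection-formula identity $\frac{\sin\pi s}{\sin\pi t}=\frac{\Gamma(1+t)\Gamma(-t)}{\Gamma(1+s)\Gamma(-s)}$ so that the signs cancel cleanly, and the changes of variables $s\mapsto\theta s+1$ and $t\mapsto-\theta t$ then produce $p^{(\alpha,\theta)}(ux)$ and $q^{(\alpha,\theta)}(uy)$. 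Once you make that correction, the rest of your outline coincides with the paper's argument.
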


In Corollary \ref{thm:hardedge}, we require $\theta \geq 1$  to make sure the integral is convergent. Note that when $\theta=1$, we have (see \cite[formula 10.9.23]{DLMF})
\begin{align*}
p^{(\alpha,1)}(ux)=(ux)^{\alpha/2}J_{\alpha}(2\sqrt{ux}),\qquad
q^{(\alpha,1)}(uy)=(uy)^{-\alpha/2}J_{\alpha}(2\sqrt{uy}),
\end{align*}
where $J_\alpha$ denotes the Bessel function of the first kind of order $\alpha$. It then follows from \eqref{eq:hardedgelim2} that
\begin{align*}
K^{(\alpha,1)}(x,y) & =\left(\frac{x}{y}\right)^{\alpha/2}\int_0^1
J_{\alpha}(2\sqrt{ux})J_{\alpha}(2\sqrt{uy}) \ud u \\
    & = 4   \left(\frac{x}{y}\right)^{\alpha/2} K^{\rm Bes}_{\alpha}(4x,4y),
    \end{align*}
where
\[ K^{\rm Bes}_{\alpha}(x,y) = \frac{J_\alpha(\sqrt x)\sqrt y J'_\alpha
(\sqrt y)-\sqrt x J_\alpha'(\sqrt x)J_\alpha(\sqrt y)}{2(x-y)},\quad
\alpha>-1,
\]
is the Bessel kernel of order $\alpha$ that appears as the scaling limit of the Laguerre unitary ensembles at the hard edge \cite{Forrester93,TW94}, as expected. Furthermore, a comparison of \eqref{eq:BorHard} and \eqref{eq:ZhangHard}--\eqref{eq:hardedgelim} gives us the following identity
\begin{multline}\label{eq:BorZhang}
\theta x^{\alpha}\int_0^1J_{\frac{\alpha+1}{\theta},\frac{1}{\theta}}(ux)J_{\alpha+1,\theta}((uy)^{\theta})u^\alpha\ud u \\ =
\frac{\theta }{(2\pi i)^2} \int_{c-i\infty}^{c+i\infty} \ud s \oint_{\Sigma}  \ud t
   \frac{\Gamma(s+1)\Gamma(\alpha+1+\theta s )}{\Gamma(t+1)\Gamma(\alpha+1+\theta t)}
            \frac{\sin\pi s}{\sin \pi t}       \frac{x^{-\theta s -1}y^{\theta t }}{s-t},\quad \theta\geq 1.
\end{multline}
For a direct proof of the above formula; see Remark \ref{remark2} below.

We believe that the new integral representations \eqref{eq:hardedgelim} and \eqref{eq:hardedgelim2} for $K^{(\alpha,\theta)}$ will also facilitate further investigations of relevant quantities, say, the differential equations for the associated Fredholm determinants, as done in \cite{Stro14}--\cite{TW94c}. The studies of these aspects will be the topics of future research.

By performing an asymptotic analysis for the double contour integral representation \eqref{eq:Kn_double_con}, we are able to confirm the bulk and soft edge universality for biorthogonal Laguerre ensembles, which are left open in \cite{CR14}. The relevant results are stated as follows.

\begin{thm}[Bulk and soft edge universality]\label{thm:bulk}
For $x_0\in(0, ( 1+\theta)^{1+\theta}/\theta^\theta)$, which is parameterized through \eqref{eq:para x} by  $\varphi = \varphi(x_0) \in (0, \pi/(1+\theta))$, we have, with $\alpha,\theta$ being fixed,
\begin{multline}\label{eq:bulk univ}
 \lim_{n \to \infty} \frac{e^{-\pi\eta \cot\varphi}}{e^{-\pi \xi \cot\varphi}} \frac{1}{\rho(\varphi)x_0^{1-\frac{1}{\theta}}} K_n^{(\alpha,\theta)}\left(n \theta \left(x_0+\frac{\xi}{n\rho(\varphi)}\right)^{\frac{1}{\theta}}, n \theta \left(x_0+\frac{\eta}{n\rho(\varphi)}\right)^{\frac{1}{\theta}}\right) \\
=K_{\sin}(\xi,\eta),
\end{multline}
uniformly for $\xi$ and $\eta$ in any compact subset of $\mathbb{R}$, where $\rho(\varphi)$ is defined in \eqref{eq:density} and
\begin{equation}
K_{\sin}(x,y):=\frac{\sin\pi(x-y)}{\pi(x-y)}
\end{equation}
is the normalized sine kernel.

For the soft edge, we have
  \begin{multline}\label{eq:edge univer}
    \lim_{n \to \infty} \frac{e^{-2^{-\frac{1}{3}}(1+\theta)^{\frac{2}{3}} \eta n^{\frac{1}{3}}}}{e^{-2^{-\frac{1}{3}}(1+\theta)^{\frac{2}{3}} \xi n^{\frac{1}{3}}}}\frac{(1+\theta)^{\frac{2}{3}+\frac{1}{\theta}}}{2^{\frac{1}{3}}} n^{\frac{1}{3}} K_n^{(\alpha,\theta)}\left(n\theta\left( x_* + \frac{ c_* \xi}{n^{\frac{2}{3}}}\right)^{\frac{1}{\theta}}, n\theta \left( x_* + \frac{ c_* \eta}{n^{\frac{2}{3}}}\right)^{\frac{1}{\theta}}\right)
\\ = K_{\Ai}(\xi, \eta)
  \end{multline}
  uniformly for $\xi$ and $\eta$ in any compact subset of $\mathbb{R}$, where
  \begin{equation}\label{def:cast}
    x_* = \frac{( 1+\theta)^{ 1+\theta}}{\theta^\theta}, \quad  \quad c_*=\frac{(1+\theta)^{  \frac{2}{3}+\theta}}{2^{\frac{1}{3}} \theta^{\theta - 1}},
\end{equation}
and
\begin{equation} \label{def:airy kernel}
  K_{\Ai}(x,y):=\frac{\Ai(x)\Ai'(y)-\Ai'(x)\Ai(y)}{x-y} =\frac{1}{(2\pi i)^2}\int_{\gamma_R}\ud \mu\int_{\gamma_L}\ud \lambda \frac{e^{\frac{\mu^3}{3}-x\mu}}{e^{\frac{\lambda^3}{3}-y\lambda}}\frac{1}{\mu-\lambda}
\end{equation}
is the Airy kernel. In \eqref{def:airy kernel}, $\gamma_R$ and $\gamma_L$ are symmetric with respect to the imaginary axis,
and $\gamma_R$ is a contour in the right-half plane going from $e^{-\pi/3i}\cdot \infty$ to $e^{\pi/3 i}\cdot \infty$;
see Figure \ref{fig:Airy_curve} for an illustration.
\end{thm}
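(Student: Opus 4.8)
The plan is to run a steepest-descent analysis directly on the double contour integral \eqref{eq:Kn_double_con}, which is the whole point of having proved Theorem \ref{thm:KnInt}. First I would substitute the scaling from \eqref{eq:bulk univ}: set $x = n\theta(x_0 + \xi/(n\rho(\varphi)))^{1/\theta}$, $y = n\theta(x_0 + \eta/(n\rho(\varphi)))^{1/\theta}$, and use Stirling's formula to write the ratio of Gamma functions in the integrand as $\exp(n(\phi(s) - \phi(t)) + \cdots)$ for an appropriate phase function $\phi$. Concretely, writing $x^{-\theta s} = \exp(-\theta s \log x)$ and absorbing the $\log(n\theta) + \tfrac1\theta\log x_0$ terms into the exponent, one expects
$$
\phi(z) = \log\Gamma(z+1) + \log\Gamma(\alpha+1+\theta z) - \log\Gamma(z-n+1) - \theta z \log(\text{scale}),
$$
up to lower-order pieces; after rescaling $s,t$ by $n$ the relevant phase becomes a fixed function of $s/n$, $t/n$ whose critical points I would compute. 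The parametrization \eqref{eq:para x}--\eqref{eq:density} is engineered precisely so that the saddle point lands at a prescribed location and the local density there equals $\rho(\varphi)$; I would verify that the saddle-point equation $\phi'(z_0) = 0$ reproduces the equation $J(z) = x$ from \cite{CR14}, and that in the bulk the two relevant saddles are complex conjugate, colliding only at the edges.

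Next, for the bulk statement, I would deform the $s$-contour (the vertical line $\Re s = c$) and the $t$-contour $\Sigma$ so that they pass through the conjugate pair of saddles $z_0, \bar z_0$ along steepest-descent directions; the key inequalities $\Re\phi(s) < \Re\phi(z_0)$ off the saddle on the $s$-contour and $\Re\phi(t) > \Re\phi(z_0)$ on the $t$-contour must be checked globally, which localizes the integral to $O(1/\sqrt n)$ neighborhoods of the saddles. Zooming in with $s = z_0 + \sigma/\sqrt n$ etc., the quadratic part of $\phi$ produces Gaussian integrals, and the linear term $\xi, \eta$ dependence (coming from expanding $x^{-\theta s - 1} y^{\theta t}$ to first order in $\xi/(n\rho), \eta/(n\rho)$) contributes the oscillatory factors. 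After the dust settles, the four cross-terms between $\{z_0, \bar z_0\}$ for $s$ and $\{z_0,\bar z_0\}$ for $t$ combine: two of them are exponentially suppressed, and the surviving two assemble into $\frac{\sin\pi(\xi-\eta)}{\pi(\xi-\eta)}$ after the conjugation/prefactor $e^{-\pi\eta\cot\varphi}/e^{-\pi\xi\cot\varphi}$ and the normalization $1/(\rho(\varphi) x_0^{1 - 1/\theta})$ are accounted for — the $\cot\varphi$ arises as the imaginary part of the saddle and the $x_0^{1-1/\theta}$ from the Jacobian of $x \mapsto x^{1/\theta}$. Uniformity in compact $\xi,\eta$ follows from the uniform decay estimates on the tails.

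For the soft edge, the two saddles coalesce at $z = z_*$ corresponding to $x_* = (1+\theta)^{1+\theta}/\theta^\theta$, so $\phi'(z_*) = \phi''(z_*) = 0$ and $\phi'''(z_*) \neq 0$; the appropriate local scaling is $z = z_* + c\, n^{-1/3}\zeta$ with $c$ chosen (this is where the constant $c_*$ and the $2^{-1/3}(1+\theta)^{2/3}$ factors come from) so that the cubic term of $\phi$ becomes $\tfrac{\zeta^3}{3}$ and the first-order $\xi$-perturbation of $x^{-\theta s}$ becomes the linear term $-\xi\zeta$. The deformed contours then become the Airy contours $\gamma_R, \gamma_L$ of \eqref{def:airy kernel}, and the integral converges to exactly the double-integral representation of $K_{\Ai}$ given there; again one must check the global sign of $\Re\phi$ away from $z_*$ on the deformed contours. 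The main obstacle, in both regimes, is establishing these global steepest-descent inequalities for the phase $\phi$ — proving that the deformed $s$- and $t$-contours can be chosen so that $\Re\phi$ is genuinely maximized/minimized at the saddle along the entire contour, uniformly as $n\to\infty$ and including the behavior near $z = n$ (where $\log\Gamma(z-n+1)$ is singular) and near $z = 0, -1$ (poles/branch issues of the other Gamma factors) and as $\Im z \to \pm\infty$. Controlling these contour deformations — and verifying the bookkeeping that the four residual saddle contributions reorganize into the sine (resp.\ Airy) kernel with precisely the stated prefactors — is the technical heart of the argument.
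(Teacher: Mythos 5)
Your high-level plan — run a steepest-descent analysis on the double contour integral \eqref{eq:Kn_double_con}, approximate the Gamma-function phase via Stirling, locate the saddle points through the parametrization \eqref{eq:para x}, and treat the coalescence at the right edge via a cubic local model — matches the paper's approach, and your soft-edge sketch is essentially the paper's argument (Taylor expand $\hat F$ to cubic order around the unique saddle $z_0 = 1 + 1/\theta$, rescale by $n^{-1/3}c_1$, land on the Airy double integral). However, your description of the bulk mechanism is wrong in a way that would sink the proof.

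You propose that after localizing both contours to $O(1/\sqrt n)$ neighborhoods of the two conjugate saddles, the sine kernel emerges from ``four cross-terms, two exponentially suppressed, two assembling into $\sin\pi(\xi-\eta)/(\pi(\xi-\eta))$.'' That is the intuition from Christoffel--Darboux-type arguments with orthogonal polynomials, but it is \emph{not} the mechanism in this double-contour-integral setting, and it gives the wrong order of magnitude. In the paper, after choosing $\mathcal C$ to be the vertical line through $nw_\pm$ and deforming $\Sigma$ to hug $n\tilde\Sigma$, the kernel is split as $K_n = I_1 + I_2$, where $I_1$ collects the curved part of $\Sigma$ (which contains the saddle-point neighborhoods) and $I_2$ collects two short vertical segments straddling $\mathcal C$. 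The paper shows $I_1 = \bigO(n^{-1/2})$ — the saddle-point contributions, including all four cross-terms you describe, are \emph{subleading}. The leading $O(1)$ contribution is $I_2$, which by Cauchy's theorem collapses (as the two vertical segments close onto $\mathcal C$) to the residue of the pole $1/(s-t)$ integrated over the segment of $\mathcal C$ joining $nw_-$ to $nw_+$. That single integral $\frac{1}{2\pi i x^{1/\theta}}\int_{nw_-}^{nw_+}(y/x)^s\,ds$ is then evaluated exactly, and it is the pair of endpoint evaluations at $nw_\pm$ that produces the $e^{\pm i\pi(\xi-\eta)}$ difference, hence $K_{\sin}$, with the $\cot\varphi$ conjugation factor coming from $\Re w_\pm$.

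Concretely, your proposal misses a topological point that forces this residue to appear: the $t$-contour $\Sigma$ is a closed loop around the poles $0,1,\dots,n-1$ while the $s$-contour is an unbounded vertical line with $\Re s = c < 1$, so after deforming both to pass through the saddles they \emph{must} cross, and the residue from $1/(s-t)$ on the connecting segment is unavoidable. If you try to arrange the contours so that they do not cross and then add up saddle contributions, every term you list is $\bigO(n^{-1/2})$ or smaller and you lose the kernel. The technical heart of the bulk case is therefore not ``global steepest-descent inequalities plus reorganizing four saddle terms,'' but (i) the global inequalities $\Re\hat F(z;x_0)$ has its max on $\mathcal C$ and min on $\tilde\Sigma$ at $w_\pm$, to control $I_1$, and (ii) isolating and evaluating the residue integral $I_2$. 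Without identifying (ii) as the source of the leading term, the proposed proof would not produce the claimed limit.
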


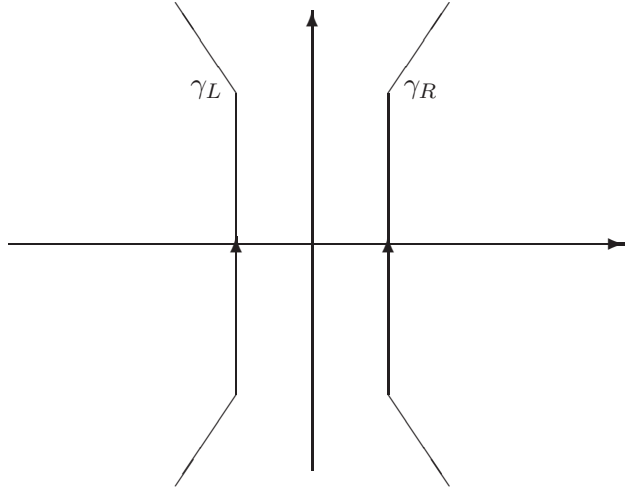
\begin{figure}[t]
\begin{center}
   \setlength{\unitlength}{1truemm}
   \begin{picture}(100,70)(-5,2)
       \put(30,60){\line(0,-1){40}}
       \put(50,60){\line(0,-1){40}}
       \put(50,60){\line(2,3){8}}
       \put(30,60){\line(-2,3){8}}
       \put(50,20){\line(2,-3){8}}
       \put(30,20){\line(-2,-3){8}}
       \put(52,60){$\gamma_R$}
       \put(24,60){$\gamma_L$}
       \put(0,40){\line(1,0){80}}
       \put(40,70){\line(0,-1){60}}

        \put(30,40){\thicklines\vector(0,1){1}}
       \put(50,40){\thicklines\vector(0,1){1}}
       \put(80,40){\thicklines\vector(1,0){1}}
       \put(40,70){\thicklines\vector(0,1){1}}

   \end{picture}

  \caption{ The contours $\gamma_L$ and $\gamma_R$ in the definition of Airy kernel.}
   \label{fig:Airy_curve}
\end{center}
\end{figure}

In the special case $\theta=2$, $\alpha\in\mathbb{N}\cup \{0 \}$, the bulk universality is first proved in \cite{LSZ}.
\begin{remark}
   The result of soft edge universality \eqref{eq:edge univer} also implies that the limiting distribution of the largest particle in biorthogonal Laguerre ensembles, after proper scaling, converges to the well-known Tracy-Widom distribution \cite[Theorem 3.1.5]{Anderson-Guionnet-Zeitouni10}.
\end{remark}

\subsection{Organization of the rest of the paper}

The rest of this paper is organized as follows. Our main results are proved in Section \ref{sec:proofs}. The proofs of Theorem \ref{thm:KnInt} and Corollary \ref{thm:hardedge} are given in Sections \ref{sec:proofThm1} and \ref{sec:proofThm2}, respectively, which rely on  two propositions concerning the contour integral representations of $p_k^{(\alpha,\theta)}$ and $q_k^{(\alpha,\theta)}$ in Section \ref{sec:props}. These formulas might be viewed as extensions of the intensively studied $\theta \in \mathbb{N}$ case, and we give direct proofs here. The nice structures of these formulas then allow us to simplify \eqref{eq:kerBio} into a closed integral form as well as to obtain the hard edge scaling limits, following the idea in recent work of the author with Kuijlaars \cite{Kuijlaars-Zhang14}. The bulk and soft edge universality stated in Theorem \ref{thm:bulk} is proved in Section \ref{sec:proofThm3}. We will perform a steepest descent analysis of the double contour integral \eqref{eq:Kn_double_con}, whose integrand constitutes products and ratios of gamma functions with large arguments. It comes out that the strategy developed by Liu, Wang and the author in \cite{LWZ14} (see also \cite{Adler-van_Moerbeke-Wang11}) works well in the present case. Roughly speaking, the strategy is to approximate the logarithmics of the gamma functions by elementary functions for $n$ large, which play the role of phase functions. There will be two complex conjugate saddle points in the bulk regime, corresponding to the sine kernel, while in the edge regime these two saddle points coalesce into a single one, which leads to the Airy kernel. A crucial feature of the analysis is to construct suitable contours of integration with the aid of the parametrization \eqref{eq:para x}. Since the asymptotic analysis is carried out in a manner similar to that performed in \cite{LWZ14}, emphasis will be placed on key steps and demonstration of basic ideas in the proof of Theorem \ref{thm:bulk}, but refer to \cite{LWZ14} for some technical issues.

We finally focus on the cases when $\theta =M \in\mathbb{N}$, and relate $K_n^{(\alpha,M)}$ to correlation kernels of specified parameters arising from products of $M$ Ginibre matrices. Some remarks are made in accordance with this relation to conclude this paper. For convenience of the reader, we include a short introduction to the Meijer G-function in the Appendix.

\section{Proofs of the main results}\label{sec:proofs}

\subsection{Contour integral representations of $p_k^{(\alpha,\theta)}(x)$ and $q_k^{(\alpha,\theta)}(x)$}\label{sec:props}

\begin{prop}\label{prop:qkint}
We have for $x>0$,
\begin{align}\label{eq:qkint}
q_k^{(\alpha,\theta)}(x) & = (-1)^k\sum_{j=0}^k \binom{k}{j}\frac{(-x)^j}{\Gamma(\alpha+1+j\theta)}\Gamma(\alpha+1+k\theta) \nonumber \\
& = \frac{\Gamma(\alpha+1+k\theta) k!}{2 \pi i}\oint_{\Sigma}\frac{\Gamma(t-k)x^t}{\Gamma(t+1)\Gamma(\alpha+1+\theta t)}  \ud t,
\end{align}
where $\Sigma$ is a closed contour that encircles $0, 1, \ldots, k$ once in the positive direction.
\end{prop}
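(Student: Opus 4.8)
The plan is to establish the first equality by explicitly verifying the biorthogonality conditions \eqref{eq:bioOP}, and then obtain the second (contour integral) representation by recognizing the finite sum as a residue sum. For the first step, I would define
$$
\widetilde{q}_k(x) = (-1)^k\sum_{j=0}^k \binom{k}{j}\frac{(-x)^j}{\Gamma(\alpha+1+j\theta)}\Gamma(\alpha+1+k\theta),
$$
note that this is a monic polynomial of degree $k$ (the leading coefficient is $(-1)^k(-1)^k\binom{k}{k}\Gamma(\alpha+1+k\theta)/\Gamma(\alpha+1+k\theta) = 1$), and then check that $\int_0^\infty p_j^{(\alpha,\theta)}(x)\,\widetilde{q}_k(x^\theta)\,x^\alpha e^{-x}\,\mathrm{d}x = \delta_{j,k}$. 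Since the $p_j^{(\alpha,\theta)}$ span the polynomials of degree $\le j$, it suffices to show that $\int_0^\infty x^m\,\widetilde{q}_k(x^\theta)\,x^\alpha e^{-x}\,\mathrm{d}x = 0$ for $0 \le m \le k-1$ (orthogonality to lower degree) together with the correct normalization at $m=k$; uniqueness of the monic family then forces $\widetilde{q}_k = q_k^{(\alpha,\theta)}$. Using $\int_0^\infty x^{\alpha+m+j\theta}e^{-x}\,\mathrm{d}x = \Gamma(\alpha+m+1+j\theta)$, the relevant integral becomes, up to the constant $(-1)^k\Gamma(\alpha+1+k\theta)$,
$$
\sum_{j=0}^k \binom{k}{j}(-1)^j \frac{\Gamma(\alpha+m+1+j\theta)}{\Gamma(\alpha+1+j\theta)}.
$$
The quotient $\Gamma(\alpha+m+1+j\theta)/\Gamma(\alpha+1+j\theta) = (\alpha+1+j\theta)(\alpha+2+j\theta)\cdots(\alpha+m+j\theta)$ is a polynomial in $j$ of degree exactly $m$, so the alternating binomial sum $\sum_{j=0}^k\binom{k}{j}(-1)^j(\text{poly of degree }m\text{ in }j)$ vanishes whenever $m < k$ (the finite difference annihilates polynomials of degree below $k$) and equals $(-1)^k k!\,\theta^k$ times the leading coefficient when $m=k$, giving the normalization $\int_0^\infty p_k q_k x^\alpha e^{-x}\,\mathrm{d}x = 1$ after matching with $\kappa_k$. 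This is the standard trick of computing biorthogonal polynomials via annihilation of finite differences.

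For the second step, I would turn the finite sum into the contour integral by residue calculus. Consider the integrand $f(t) = \dfrac{\Gamma(t-k)\,x^t}{\Gamma(t+1)\Gamma(\alpha+1+\theta t)}$. The factor $\Gamma(t-k)$ has simple poles at $t = k, k-1, \ldots, 0, -1, \ldots$, but the factor $1/\Gamma(t+1)$ vanishes at $t = -1, -2, \ldots$, cancelling all poles with $t < 0$; hence inside a contour $\Sigma$ enclosing exactly $0, 1, \ldots, k$ the only poles are at $t = j$ for $j = 0, \ldots, k$. The residue of $\Gamma(t-k)$ at $t = j$ is $(-1)^{k-j}/(k-j)!$, so
$$
\Res_{t=j} f(t) = \frac{(-1)^{k-j}}{(k-j)!}\cdot\frac{x^j}{\Gamma(j+1)\Gamma(\alpha+1+\theta j)} = \frac{(-1)^{k-j}}{(k-j)!\,j!}\cdot\frac{x^j}{\Gamma(\alpha+1+\theta j)}.
$$
Summing over $j$ and multiplying by $2\pi i$ gives $\sum_{j=0}^k \dfrac{(-1)^{k-j}}{(k-j)!\,j!}\dfrac{x^j}{\Gamma(\alpha+1+\theta j)} = \dfrac{(-1)^k}{k!}\sum_{j=0}^k\binom{k}{j}\dfrac{(-x)^j}{\Gamma(\alpha+1+\theta j)}$, which is exactly $\dfrac{1}{\Gamma(\alpha+1+k\theta)\,k!}$ times the first line of \eqref{eq:qkint}. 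Multiplying through by $\Gamma(\alpha+1+k\theta)k!/(2\pi i)$ yields the claimed identity.

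I do not anticipate a serious obstacle here; both steps are essentially bookkeeping. The one point requiring a little care is the normalization constant in the biorthogonality check — one must confirm that the constant $\Gamma(\alpha+1+k\theta)$ in front is precisely what makes $q_k^{(\alpha,\theta)}$ monic (as required by the normalization convention $q_k^{(\alpha,\theta)}(x) = x^k + \ldots$ stated before \eqref{eq:bioOP}), rather than what makes the pairing $\int p_k q_k = 1$; the latter is automatically handled by the definition of $p_k^{(\alpha,\theta)}$ with leading coefficient $\kappa_k$, and one should double-check that the resulting $\kappa_k$ is positive and consistent, which it will be since $\Gamma(\alpha+1+k\theta) > 0$ for $\alpha > -1$, $\theta > 0$. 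A secondary technical point is ensuring the cancellation of the poles of $\Gamma(t-k)$ at negative integers by $1/\Gamma(t+1)$ is stated cleanly, so that the contour $\Sigma$ genuinely captures only the $k+1$ desired residues; this is immediate from the fact that $\Gamma(t+1)$ has poles (equivalently $1/\Gamma(t+1)$ has zeros) exactly at the negative integers.
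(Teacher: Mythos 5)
Your proposal is correct, and for the first equality it takes a genuinely different route from the paper's. You verify the biorthogonality conditions directly: after pairing $\widetilde q_k(x^\theta)$ against $x^m$ under the weight $x^\alpha e^{-x}$, the quotient $\Gamma(\alpha+1+m+j\theta)/\Gamma(\alpha+1+j\theta)$ is a polynomial of degree $m$ in $j$, and the $k$-th finite difference $\sum_{j=0}^k\binom{k}{j}(-1)^j(\cdot)$ annihilates it whenever $m<k$. The paper instead obtains the explicit sum by invoking the general determinantal expression for $q_k^{(\alpha,\theta)}$ in terms of bimoments (citing Desrosiers--Forrester), computing $D_n$ by Gaussian elimination and then expanding the bordered determinant along its last row. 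Your argument is essentially the ``check orthogonality directly'' method that the paper attributes to Konhauser in the case $\theta\in\mathbb{N}$, carried out verbatim for general $\theta>0$; it is more self-contained in that it bypasses the determinantal machinery, at the cost of re-deriving the finite-difference annihilation fact. For the second equality your residue computation is identical to the paper's. One minor remark: the $m=k$ normalization is not actually needed to conclude $\widetilde q_k = q_k^{(\alpha,\theta)}$, since $q_k$ is already uniquely determined by being monic of degree $k$ and orthogonal to all polynomials of degree $\le k-1$; the $m=k$ calculation only serves to fix $\kappa_k$, which you correctly note is a separate consistency check rather than part of the identification.
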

\begin{proof}
The first identity in \eqref{eq:qkint} follows from the determinantal expressions for the polynomials $q_k^{(\alpha,\theta)}$. By setting the bimoments
\begin{equation}
m_{j,k}=\int_0^{+\infty}x^{\alpha+j+\theta k}e^{-x}\ud x =\Gamma(\alpha+j+k\theta+1), \quad j,k\in\mathbb{N}\cup \{0\},
\end{equation}
we define
\begin{align}\label{def:bimom}
D_n & =\det(m_{j,k})_{j,k=0,\ldots,n} \nonumber \\
&=
\det
\begin{pmatrix}
\Gamma(\alpha+1) & \Gamma(\alpha+1+\theta) & \cdots & \Gamma(\alpha+1+n\theta) \\
\Gamma(\alpha+2) & \Gamma(\alpha+2+\theta) & \cdots &  \Gamma(\alpha+2+n\theta) \\
\vdots &  \vdots & \vdots & \vdots \\
\Gamma(\alpha+1+n) & \Gamma(\alpha+1+n+\theta) & \cdots &\Gamma(\alpha+1+n(\theta+1))
\end{pmatrix}.
\end{align}

From the general theory of biorthogonal polynomials (cf. \cite[Proposition 2]{DF08}), it follows that
\begin{align}\label{eq:qalpha}
q_k^{(\alpha,\theta)}(x)
&=\frac{1}{D_{k-1}}\det
\begin{pmatrix}
m_{0,0} & m_{0,1} & \cdots & m_{0,k} \\
\vdots &  \vdots & \vdots & \vdots \\
m_{k-1,0} & m_{k-1,1} & \cdots & m_{k-1,k} \\
1 & x & \cdots  & x^k
\end{pmatrix} \nonumber
\\
&=\frac{1}{D_{k-1}} \det
\begin{pmatrix}
\Gamma(\alpha+1) & \Gamma(\alpha+1+\theta) & \cdots & \Gamma(\alpha+1+k\theta) \\
\vdots &  \vdots & \vdots & \vdots \\
\Gamma(\alpha+k) & \Gamma(\alpha+k+\theta) & \cdots & \Gamma(\alpha+k(1+\theta)) \\
1 & x & \cdots &  x^k
\end{pmatrix}, ~~k \geq 1
\end{align}
with $q_0^{(\alpha,\theta)}(x)=1$. With the aid of functional relation
\begin{equation}\label{eq:Gamma}
\Gamma(z+1)=z\Gamma(z),
\end{equation}
an easy Gauss elimination process gives us
\begin{equation}\label{eq:detBiMon}
D_n=\prod_{k=0}^{n} k! \theta^k \Gamma(\alpha+1+k\theta).
\end{equation}
Similarly, by expanding the matrix in \eqref{eq:qalpha}
along the last row and evaluating the associated minors, it follows
\begin{equation}\label{eq:qexpli}
q_k^{(\alpha,\theta)}(x)=(-1)^k\sum_{j=0}^k \binom{k}{j}\frac{(-x)^j}{\Gamma(\alpha+1+j\theta)}\Gamma(\alpha+1+k\theta),
\end{equation}
see also \cite{Kon67} for a proof of \eqref{eq:qexpli} by checking the orthogonality directly if $\theta=M$.

To show the second identity in \eqref{eq:qkint}, we note that integrand in the right-hand side of \eqref{eq:qkint} is
meromorphic on $\mathbb C$ with simple poles at $0, 1, \ldots, k$
(the poles of the numerator at the negative integers are canceled by the poles of the factor $\Gamma(t+1)$ in the denominator). Hence, by the residue theorem and a straightforward calculation, we obtain
\begin{align}
&\frac{\Gamma(\alpha+1+k\theta) k!}{2 \pi i}\oint_{\Sigma}\frac{\Gamma(t-k)x^t}{\Gamma(t+1)\Gamma(\alpha+1+\theta t)} \ud t
\nonumber \\
&=\Gamma(\alpha+1+k\theta) k! \sum_{j=0}^k \Res_{t=j}
    \left( \frac{\Gamma(t-k)}{\Gamma(t+1)\Gamma(\alpha+1+\theta t)} \right) x^j
    \nonumber \\
    &=\Gamma(\alpha+1+k\theta) k! \sum_{j=0}^k \frac{(-1)^{k-j}x^j}{(k-j)!j!\Gamma(\alpha+1+j\theta)}
    \nonumber \\
    &=(-1)^k\sum_{j=0}^k \binom{k}{j}\frac{(-x)^j}{\Gamma(\alpha+1+j\theta)}\Gamma(\alpha+1+k\theta).
\end{align}

This completes the proof of Proposition \ref{prop:qkint}.
\end{proof}

\begin{prop}\label{prop:pkint}
For $ p_k^{(\alpha,\theta)}$, we have the following Mellin-Barnes integral representation
\begin{equation} \label{eq:pkint}
    x^\alpha e^{-x} p_k^{(\alpha,\theta)}(x)
    =  \frac{1}{2\pi i \Gamma(\alpha+1+k\theta) k!} \int_{c-i\infty}^{c+i\infty} \frac{\Gamma\left(\frac{s}{\theta}+1-\frac{1}{\theta}\right)}{\Gamma\left(\frac{s}{\theta}+1-\frac{1}{\theta}-k\right)}
    \Gamma(\alpha+s)x^{-s} \ud s,
    \end{equation}
    where $c > \max\{-\alpha, 1-\theta\}$ and $x>0$.
\end{prop}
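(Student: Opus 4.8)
The plan is to start from the explicit finite-sum formula for $p_k^{(\alpha,\theta)}$ obtained, exactly as for $q_k^{(\alpha,\theta)}$ in Proposition \ref{prop:qkint}, from the determinantal representation of biorthogonal polynomials. Recall that $p_k^{(\alpha,\theta)}(x)=\kappa_k\,(\text{cofactor expansion})$, where the relevant bimoment matrix is the transpose-type partner of \eqref{def:bimom}, so that $p_k^{(\alpha,\theta)}(x)=\sum_{j=0}^{k}a_{k,j}x^j$ with coefficients $a_{k,j}$ given by ratios of products of gamma functions and the leading coefficient $\kappa_k$ fixed by the biorthogonality \eqref{eq:bioOP}. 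Concretely, I expect
\[
x^{\alpha}e^{-x}p_k^{(\alpha,\theta)}(x)=\frac{1}{\Gamma(\alpha+1+k\theta)\,k!}\sum_{j=0}^{k}a_{k,j}\,x^{\alpha+j}e^{-x}
\]
for an explicit $a_{k,j}$; one can pin these down either by the Gauss elimination already used to get \eqref{eq:detBiMon} or — cleaner — by imposing \eqref{eq:bioOP} against $q_m^{(\alpha,\theta)}(x^{\theta})$ using the known $q$'s and the beta-type integrals $\int_0^\infty x^{\alpha+j}e^{-x}\,dx=\Gamma(\alpha+1+j)$. This bookkeeping step, determining $a_{k,j}$ in closed form, is the first task.

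Next I would recognize the claimed right-hand side of \eqref{eq:pkint} as the Mellin--Barnes integral of such a finite sum. The key analytic input is the Mellin pair
\[
\frac{1}{2\pi i}\int_{c-i\infty}^{c+i\infty}\Gamma(\alpha+s)\,x^{-s}\,ds=x^{\alpha}e^{-x},\qquad \Re(\alpha+s)>0,
\]
i.e.\ the Mellin transform of $x^{\alpha}e^{-x}$ is $\Gamma(\alpha+s)$, valid for $c>-\alpha$. Multiplying the integrand by the ratio $\Gamma\!\big(\tfrac{s}{\theta}+1-\tfrac1\theta\big)/\Gamma\!\big(\tfrac{s}{\theta}+1-\tfrac1\theta-k\big)$ and observing that this ratio is a \emph{polynomial in $s$ of degree $k$} — namely $\prod_{i=1}^{k}\big(\tfrac{s}{\theta}+1-\tfrac1\theta-i\big)$, a Pochhammer-type product, so it equals $\sum_{j=0}^{k}b_{k,j}s^{j}$ for explicit $b_{k,j}$ — I can pull this finite linear combination of powers of $s$ through the contour integral. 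Since multiplication of a Mellin transform by $s$ corresponds to applying $-x\frac{d}{dx}$ on the original side (equivalently $s\,\Gamma(\alpha+s)x^{-s}$ integrates to $-x\frac{d}{dx}(x^\alpha e^{-x})$, as seen by differentiating under the integral sign after shifting the contour), each monomial $s^{j}$ produces $(-x\tfrac{d}{dx})^{j}(x^{\alpha}e^{-x})$; alternatively, and more transparently, use the shift $\Gamma(\alpha+s)=\Gamma(\alpha+1+s)/(\alpha+s)$ repeatedly, or simply substitute the finite sum $\sum_j b_{k,j}s^j$ and evaluate each term as $\frac{1}{2\pi i}\int \Gamma(\alpha+s)\,s^j x^{-s}\,ds$, which by moving the contour and collecting residues at $s=-\alpha,-\alpha-1,\dots$ reassembles into $x^\alpha e^{-x}$ times a polynomial of degree $k$ in $x$. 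Matching the resulting polynomial with $\sum_j a_{k,j}x^j/(\Gamma(\alpha+1+k\theta)k!)$ reduces the proposition to a purely algebraic identity between the two families of coefficients $\{a_{k,j}\}$ and $\{b_{k,j}\}$, which one verifies by a short induction on $k$ (or by comparing generating functions, using the binomial/Pochhammer structure).

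The hypothesis $c>\max\{-\alpha,1-\theta\}$ is exactly what guarantees (i) $\Re(\alpha+s)>0$ on the contour so the Mellin inversion formula applies, and (ii) that the lowest pole of $\Gamma\big(\tfrac{s}{\theta}+1-\tfrac1\theta\big)$, located at $s=1-\theta$, lies to the left of the contour — but in fact the ratio is entire (a polynomial), so the second condition is only needed to keep the numerator and denominator gamma factors individually meromorphic in the region used when one prefers to argue by residues rather than by the polynomial identity; I would remark on this. I expect the main obstacle to be purely organizational: getting the explicit coefficients $a_{k,j}$ from the determinant right, and then checking the coefficient identity $a_{k,j}=\Gamma(\alpha+1+k\theta)k!\cdot[\text{coefficient of }x^{j}\text{ in }\frac{1}{2\pi i}\int \Gamma(\alpha+s)\sum_i b_{k,i}s^i x^{-s}ds]$ — essentially an exercise in manipulating Pochhammer symbols, with no conceptual difficulty but some room for sign and index errors. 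Throughout I would follow the same strategy as in the proof of Proposition \ref{prop:qkint}, and cross-check the final formula by specializing to $\theta=1$ (where $p_k^{(\alpha,1)}$ should reduce to a classical Laguerre polynomial and the Mellin--Barnes integral to a known representation) and to $\theta\in\mathbb{N}$ against the formulas in \cite{Kon67}.
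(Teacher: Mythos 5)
Your plan diverges from the paper's at exactly the point you label ``purely organizational,'' and that is where it breaks down. For $q_k^{(\alpha,\theta)}$ the cofactor expansion is clean because each minor of \eqref{def:bimom} keeps all rows $0,\dots,k-1$ and drops one \emph{column}: after pulling $\Gamma(\alpha+1+l\theta)$ out of column $l$, the remaining matrix $\bigl((\alpha+1+l\theta)_i\bigr)$ row-reduces unipotently to an honest Vandermonde in the variables $z_l=\alpha+1+l\theta$, whose determinant is a product. The analogous expansion of $p_k^{(\alpha,\theta)}$ keeps all $k$ columns $l=0,\dots,k-1$ and drops one \emph{row} $j$ from $\{0,\dots,k\}$. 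After the same column factoring you are left with $\det\bigl(z_l^i\bigr)_{i\in\{0,\dots,k\}\setminus\{j\},\,l=0,\dots,k-1}$, a generalized Vandermonde with a gap in the exponent set: it equals the usual Vandermonde times the elementary symmetric polynomial $e_{k-j}(z_0,\dots,z_{k-1})$, which for generic real $\theta>0$ does \emph{not} collapse to a ratio of gamma functions. (It does for $\theta\in\mathbb{N}$, which is why the Konhauser-type references you cite manage with explicit coefficients, and why your sanity checks at $\theta=1$ or $\theta\in\mathbb{N}$ would mislead you into thinking the general case is equally explicit.) So there is no tidy $a_{k,j}$ to match against, and the ``coefficient identity'' you want to verify does not have a product form; the same obstruction appears if you instead try to solve the linear system coming from \eqref{eq:bioOP} against the known $q_m$'s.

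The paper avoids computing $p_k^{(\alpha,\theta)}$ altogether. By uniqueness of the biorthogonal pair, the Mellin--Barnes integral equals $x^\alpha e^{-x}p_k^{(\alpha,\theta)}(x)$ as soon as one verifies two things: (i) integrating it against $x^{j\theta}$ gives $\delta_{j,k}$ for $j=0,\dots,k$, which follows in one line by Mellin inversion at $s=j\theta+1$ together with the vanishing of the Pochhammer factor $\bigl(\tfrac{s}{\theta}+1-\tfrac1\theta-k\bigr)_k$ there for $j<k$; and (ii) the integral lies in $\mathrm{span}\{x^{\alpha}e^{-x},\dots,x^{\alpha+k}e^{-x}\}$. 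Your observation that $\Gamma\!\bigl(\tfrac{s}{\theta}+1-\tfrac1\theta\bigr)/\Gamma\!\bigl(\tfrac{s}{\theta}+1-\tfrac1\theta-k\bigr)$ is a degree-$k$ polynomial in $s$ is exactly the key to (ii), but the efficient way to finish is to expand that polynomial in the basis $(\alpha+s)_l$, $l=0,\dots,k$, rather than in monomials or via $(-x\,d/dx)^j$: each basis element is absorbed into $\Gamma(\alpha+l+s)$, and the Mellin pair $\frac{1}{2\pi i}\int\Gamma(\nu+s)x^{-s}\,ds=x^\nu e^{-x}$ gives $x^{\alpha+l}e^{-x}$ term by term. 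Replacing your explicit-coefficient step by the direct verification of (i), and streamlining your Mellin argument into (ii) via the $(\alpha+s)_l$ basis, recovers the paper's proof and closes the gap.
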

\begin{proof}
Note that all the poles of the integrand lie on the left of the line $\Re z = c$, it is readily seen that the integral formula in the right-hand side of \eqref{eq:pkint} is well-defined. On account of the uniqueness of biorthogonal functions, our strategy is to check the integral representation satisfies
\begin{itemize}
  \item the orthogonality conditions
  \begin{equation}\label{eq:orthcon}
  \frac{1}{2\pi i \Gamma(\alpha+1+k\theta) k!}\int_0^\infty x^{j\theta} \int_{c-i\infty}^{c+i\infty} \frac{\Gamma\left(\frac{s}{\theta}+1-\frac{1}{\theta}\right)}{\Gamma\left(\frac{s}{\theta}+1-\frac{1}{\theta}-k\right)}
    \Gamma(\alpha+s)x^{-s} \ud s \ud x
    =\delta_{j,k},
  \end{equation}
  for $j=0,1,\ldots,k$;
  \item the integral $\frac{1}{2\pi i } \int_{c-i\infty}^{c+i\infty} \frac{\Gamma\left(\frac{s}{\theta}+1-\frac{1}{\theta}\right)}{\Gamma\left(\frac{s}{\theta}+1-\frac{1}{\theta}-k\right)}
    \Gamma(\alpha+s)x^{-s} \ud s$ belongs to the linear span of $x^\alpha e^{-x},  \newline x^{\alpha+1}e^{-x}, \ldots, x^{\alpha+k}e^{-x}$.
\end{itemize}

To show \eqref{eq:orthcon}, we make use of the inversion formula for the Mellin transform and obtain
\begin{align}
  &\frac{1}{2\pi i \Gamma(\alpha+1+k\theta) k!}\int_0^\infty x^{j\theta} \int_{c-i\infty}^{c+i\infty} \frac{\Gamma\left(\frac{s}{\theta}+1-\frac{1}{\theta}\right)}{\Gamma\left(\frac{s}{\theta}+1-\frac{1}{\theta}-k\right)}
    \Gamma(\alpha+s)x^{-s} \ud s \ud x \nonumber
  \\
  &=\frac{\Gamma\left(\frac{s}{\theta}+1-\frac{1}{\theta}\right)}{\Gamma(\alpha+1+k\theta) k!\Gamma\left(\frac{s}{\theta}+1-\frac{1}{\theta}-k\right)}
    \Gamma(\alpha+s)\bigg{|}_{s=j\theta+1} \nonumber \\
    &=\frac{(j+1-k)_k\Gamma(1+\alpha+j\theta)}{\Gamma(\alpha+1+k\theta) k!}=\delta_{j,k}.
  \end{align}

To check the second statement, recall the Pochhammer symbol $(a)_k=\frac{\Gamma(a+k)}{\Gamma(a)}=a(a+1)\cdots(a+k-1)$, it is readily seen that
$$ \frac{\Gamma\left(\frac{s}{\theta}+1-\frac{1}{\theta}\right)}{\Gamma\left(\frac{s}{\theta}+1-\frac{1}{\theta}-k\right)}=
\left(\frac{s}{\theta}+1-\frac{1}{\theta}-k\right)_k=\left(\frac{s}{\theta}+1-\frac{1}{\theta}-k\right)\cdots
\left(\frac{s}{\theta}+1-\frac{1}{\theta}-1\right)$$
is a polynomials of degree $k$ in $s$, the integral is then a linear combination of weights $w_j^{(\alpha)}(x)$, $j=0,\ldots,k$, where
\begin{align}\label{def:wjalpha}
    w_j^{(\alpha)}(x) &=  \frac{1}{2\pi i} \int_{c-i\infty}^{c+i\infty} s^j \Gamma(\alpha+s) x^{-s} \ud s .
\end{align}
Thus, it suffices to check $w_j^{(\alpha)}(x)$ belongs to the linear span of $x^\alpha e^{-x}, x^{\alpha+1}e^{-x}, \ldots, x^{\alpha+j}e^{-x}$. We now expand the monomial $s^j$ in terms of the basis $(\alpha+s)_l$, $l=0,\ldots,j$, i.e.,
$$ s^j=\sum_{l=0}^j a_l(\alpha+s)_l= \sum_{l=0}^j a_l\frac{\Gamma(\alpha+l+s)}{\Gamma(\alpha+s)}$$
for some constants $a_l$ with $a_j=1$. Inserting the above formula into \eqref{def:wjalpha}, it follows that
\begin{align}
    w_j^{(\alpha)}(x) =  \frac{1}{2\pi i} \sum_{l=0}^j a_l \int_{c-i\infty}^{c+i\infty} \Gamma(\alpha+l+s) x^{-s} \ud s =\sum_{l=0}^j a_l x^{\alpha+l}e^{-x},
\end{align}
as desired, where we have made use of the fact that
$$\frac{1}{2\pi i} \int_{c-i\infty}^{c+i\infty} \Gamma(\nu+s) x^{-s} \ud s=x^\nu e^{-x}, \qquad \nu>-1;$$
see \eqref{eq:LaginMei} below.

This completes the proof of Proposition \ref{prop:pkint}.
\end{proof}

\subsection{Proof of Theorem \ref{thm:KnInt}}\label{sec:proofThm1}
With a change of variable $s\rightarrow \theta s+1-\theta$ in \eqref{eq:pkint} and contour deformation, we rewrite $x^\alpha e^{-x} p_k^{(\alpha,\theta)}(x)$ as
\begin{equation}\label{eq:pkint2}
\frac{\theta x^{\theta-1}}{2\pi i \Gamma(\alpha+1+k\theta) k! } \int_{c-i\infty}^{c+i\infty} \frac{\Gamma\left(s\right)}{\Gamma\left(s-k\right)}
    \Gamma(\theta s +1-\theta+\alpha)x^{-\theta s } \ud s,
    \end{equation}
where $c>\max\{0,1-\frac{\alpha+1}{\theta}\}$. This, together with \eqref{eq:kerBio} and \eqref{eq:qkint}, implies that
\begin{equation} \label{Knintegral0}
    K_n^{(\alpha,\theta)}(x,y) = \frac{\theta x^{\theta-1}}{(2\pi i)^2} \int_{c-i\infty}^{c+i\infty} \ud s \oint_{\Sigma}  \ud t
    \frac{\Gamma(s)\Gamma( \theta s +1-\theta+\alpha)}{\Gamma(t+1)\Gamma(\alpha+1+\theta t)}
\sum_{k=0}^{n-1} \frac{\Gamma(t-k)}{\Gamma(s-k)} x^{-\theta s }y^{\theta t} .
    \end{equation}
We now follow the idea in \cite{Kuijlaars-Zhang14}. From the functional equation \eqref{eq:Gamma}, one can
easily check that
\[ (s-t-1) \frac{\Gamma(t-k)}{\Gamma(s-k)} =
\frac{\Gamma(t-k)}{\Gamma(s-k-1)} -
\frac{\Gamma(t-k+1)}{\Gamma(s-k)}, \] which means that there is a
telescoping sum
\begin{equation} \label{Gammatelescope}
    (s-t-1) \sum_{k=0}^{n-1} \frac{\Gamma(t-k)}{\Gamma(s-k)}
    = \frac{\Gamma(t-n+1)}{\Gamma(s-n)} -\frac{\Gamma(t+1)}{\Gamma(s)}.
        \end{equation}

To make sure that $s-t-1 \neq 0$ when $s \in c + i\mathbb R$ and $t \in \Sigma$, we make the following settings. Note that $\max\{0,1-\frac{\alpha+1}{\theta}\}<1$ for $\alpha\geq -1$ and $\theta>0$, we take
$$c=\frac{1+\max\{0,1-\frac{\alpha+1}{\theta}\}}{2}<1$$ and let $\Sigma$ go around $0, 1, \ldots, n-1$ but with $\Re t > c-1$ for $t \in \Sigma$. Then we insert \eqref{Gammatelescope} into
\eqref{Knintegral0} and get
\begin{align*}
    K_n^{(\alpha, \theta)}(x,y) = &  \frac{\theta x^{\theta-1}}{(2\pi i)^2} \int_{c-i\infty}^{c+i\infty} \ud s \oint_{\Sigma}  \ud t
   \frac{\Gamma(s)\Gamma(\theta s +1-\theta+\alpha)}{\Gamma(t+1)\Gamma(\alpha+1+\theta t)}
            \frac{\Gamma(t-n+1)}{\Gamma(s-n)}       \frac{x^{-\theta s}y^{\theta t }}{s-t-1}
\nonumber \\ &- \frac{\theta x^{\theta-1}}{(2\pi i)^2}
\int_{c-i\infty}^{c+i\infty} \ud s \oint_{\Sigma}  \ud t
    \frac{\Gamma(\theta s +1-\theta+\alpha)}{\Gamma(\alpha+1+\theta t)}
            \frac{x^{-\theta s }y^{\theta t }}{s-t-1}.
    \end{align*}
The $t$-integral in the second double integral vanishes due to Cauchy's
theorem, since there are no singularities for the integrand inside
$\Sigma$. With a change of variable $s \mapsto s+1$ in the first double integral, we obtain
\eqref{eq:Kn_double_con}

This completes the proof of Theorem \ref{thm:KnInt}.

\subsection{Proof of Corollary \ref{thm:hardedge}}\label{sec:proofThm2}
The proof now is straightforward by taking limit in \eqref{eq:Kn_double_con}, as in \cite{Kuijlaars-Zhang14}.
Recall the reflection formula of the gamma function
\begin{equation}\label{eq:reflection}
\Gamma(t) \Gamma(1-t) = \frac{\pi}{\sin \pi t},
\end{equation}
it is readily seen that
\begin{equation}\label{eq:n to -n}
\frac{\Gamma(t-n+1)}{\Gamma(s-n+1)} =\frac{\Gamma(n-s)}{\Gamma(n-t)}
\frac{\sin \pi s}{\sin \pi t}.
\end{equation}

As $n \to \infty$, we have (cf. \cite[formula 5.11.13]{DLMF})
\begin{equation}\label{eq:ratio asy}
\frac{\Gamma(n-s)}{\Gamma(n-t)} = n^{t-s} \left(1+ O(n^{-1})\right),
\end{equation}
which can be easily verified by using Stirling's formula for the gamma functions.
By modifying the contour $\Sigma$ in \eqref{eq:Kn_double_con} from a
closed contour around $0, 1, \ldots, n-1$ to a two sided unbounded
contour starting from $+\infty$ in the upper
half plane and returning to $+\infty$ in the lower half plane which
encircles the positive real axis and $\Re t>c$ for $t\in\Sigma$, the scaling limits \eqref{eq:hardedgelim} follow. The interchange of limit and integrals can be justified by combining elementary estimates of the $\sin$ and gamma functions with the dominated convergence theorem, as explained in \cite{Kuijlaars-Zhang14}.

To show \eqref{eq:hardedgelim2}, we note that
\begin{equation} \label{u-integral}
    \frac{x^{-\theta s -1}y^{t \theta }}{s-t} = - \theta \int_0^1 (ux)^{-\theta s -1} (uy)^{\theta t } \ud u,
    \end{equation}
and, by \eqref{eq:reflection}, \[\frac{\sin \pi s}{\sin \pi t} =
\frac{\Gamma(1+t)\Gamma(-t)}{\Gamma(1+s) \Gamma(-s)}.\] Inserting the above two formulas into \eqref{eq:hardedgelim}, it is readily seen that
\begin{multline*}
    K^{(\alpha,\theta)}(x,y) = -  \int_0^1 \left( \frac{\theta}{2\pi i} \int_{c-i\infty}^{c+i\infty} \frac{\Gamma(\alpha+1+\theta s)}{\Gamma(-s)} (ux)^{-\theta s-1} \ud s \right) \\
    \times \left( \frac{\theta}{2\pi i} \int_{\Sigma} \frac{\Gamma(-t)}{\Gamma(\alpha+1+\theta t)} (uy)^{\theta t} \ud t \right)
    \ud u.
\end{multline*}
The change of variables $s \mapsto \theta s+1$ and $t \mapsto -\theta t$ takes the two integrals into the two functions $p^{(\alpha,\theta)}$ and $q^{(\alpha,\theta)}$ defined in \eqref{def:p} and \eqref{def:q}, respectively. The identity \eqref{eq:hardedgelim2} then follows.

This completes the proof of Theorem \ref{thm:hardedge}.

\subsection{Proof of Theorem \ref{thm:bulk}}\label{sec:proofThm3}

We start with a scaling of the correlation kernel $K_n^{(\alpha,\theta)}(x,y) \to K_n^{(\alpha,\theta)}(\theta x^{\frac{1}{\theta}},\theta y^{\frac{1}{\theta}})$.  By \eqref{eq:Kn_double_con}, it then follows that
\begin{multline}\label{eq:KnScaling}
  K_n^{(\alpha, \theta)}(\theta x^{\frac{1}{\theta}},\theta y^{\frac{1}{\theta}})
\\=  \frac{1}{(2\pi i)^2 x^{\frac{1}{\theta}}} \int_{\mathcal{C}} \ud s \oint_{\Sigma}  \ud t
   \frac{\Gamma(s+1)\Gamma(\alpha+1+\theta s )}{\Gamma(t+1)\Gamma(\alpha+1+\theta t)}
            \frac{\Gamma(t-n+1)}{\Gamma(s-n+1)}       \frac{\theta^{-\theta s} x^{-s}\theta^{\theta t} y^{t}}{s-t},
\end{multline}
where $\mathcal{C}$ and $\Sigma$ are two contours to be specified later, depending on the choices of reference points.

By setting
\begin{equation} \label{eq:defn_F}
  F(z; a) := \log \left( \frac{\Gamma(z+1)\Gamma(\alpha+1+ \theta z)}{\Gamma(z-n+1)}
              \theta^{-\theta z} a^{-z} \right), \qquad a\geq 0,
\end{equation}
where the branch cut for the logarithmic function is taken along the negative axis and we assume that the value of $\log z$ for $z \in (-\infty, 0)$ is continued from above, we could rewrite \eqref{eq:KnScaling} as
\begin{equation}\label{eq:KinF}
  K_n^{(\alpha, \theta)}(\theta x^{\frac{1}{\theta}},\theta y^{\frac{1}{\theta}})
=  \frac{1}{(2\pi i)^2 x^{\frac{1}{\theta}}} \int_{\mathcal{C}} \ud s \oint_{\Sigma}  \ud t \frac{e^{F(s; x)}}{e^{F(t; y)}} \frac{1}{s - t}.
\end{equation}

We will then perform an asymptotic analysis of \eqref{eq:KinF}. The basic idea is the following. It is clear that the function $F$ in \eqref{eq:KinF} plays the role of a phase function. For large $z$ and proper scalings, $F$ can be approximated by a more elementary function $\hat{F}$ (see \eqref{eq:hat_F} below) with the help of the Stirling's formula for gamma function. There will be two complex conjugate saddle points $w_{\pm}$ (see \eqref{eq:defn_w_pm} below) of $\hat F$ in general. In the proof of bulk universality, the two contours are deformed so that one of them will meet the pair of saddle points. It comes out that the main contribution to the integral does not come from the saddle points alone, but from the vertical line segment connecting the two points. In the proof of soft edge universality, the two saddle points coalesce into a real one. The phase function then behaves like a cubic polynomial around the saddle point (see \eqref{eq:F_hat_Taylor_expansion} below), which justifies the appearance of Airy kernel.

We also note the possibilities to deform the contours in \eqref{eq:KinF}. Firstly, it is readily seen that the integral contour for $s$ can be replaced by any infinite contour $\mathcal{C}$ oriented from $-i\infty$ to $i\infty$, as long as $\Sigma$ is on the right side of $\mathcal{C}$. One can further deform $\mathcal{C}$ such that $\Sigma$ is on its left, and the resulting double contour integral remains the same. To see this, let $\mathcal{C}$ and $\mathcal{C}'$ be two infinite contours from $-i\infty$ to $i\infty$ such that $\Sigma$ lies between $\mathcal{C}$ and $\mathcal{C}'$. An appeal to the residue theorem to the integral on $\mathcal{C} \cup \mathcal{C}'$ gives
\begin{equation}
  \int_\mathcal{C} \ud s \oint_{\Sigma}  \ud t \frac{e^{F(s; x)}}{e^{F(t; y)}} \frac{1}{s - t} - \int_{\mathcal{C}'} \ud s \oint_{\Sigma}  \ud t \frac{e^{F(s; x)}}{e^{F(t; y)}} \frac{1}{s - t} = 2\pi i \int_{\Sigma} \left( \frac{y}{x} \right)^t \ud t = 0.
\end{equation}
Hence, the double contour integral does not change if $\mathcal{C}$ is replaced by $\mathcal{C}'$. We will use such kind of contour deformation in the proof of the soft edge universality. Similarly, one can show that if $\Sigma$ is split into two disjoint closed counterclockwise contours $\Sigma = \Sigma_1 \cup \Sigma_2$, which jointly enclose poles $0, 1, \ldots, n-1$, and $\mathcal{C}$ is an infinite contour from $-i\infty$ to $i\infty$ such that $\Sigma_1$ is on the left side of $\mathcal{C}$ and $\Sigma_2$ is on the right side of $\mathcal{C}$, the formula \eqref{eq:KinF} is still valid. We will use such kind of contours in the proof of the bulk universality.

We now derive the asymptotic behavior of $F$. Recall that the Stirling's formula for gamma function \cite[formula 5.11.1]{DLMF} reads
\begin{equation}\label{eq:stirling}
\log \Gamma(z) = \left(z-\frac{1}{2}\right)\log z-z+\frac{1}{2}\log (2\pi)+\mathcal{O}\left(\frac{1}{z}\right)
\end{equation}
as $z\to\infty$ in the sector $\lvert \arg z \rvert \leq \pi-\epsilon $ for some $\epsilon > 0$. It then follows that if $\lvert z \rvert \to \infty$ and $\lvert z - n \rvert \to \infty$, while $\arg z$ and $\arg(z - n)$ are in $(-\pi + \epsilon, \pi - \epsilon)$, then uniformly
\begin{equation} \label{eq:F_in_F_tilde}
  F(z; a) = \tilde{F}(z; a) + \frac{1}{2}( \log z - \log(z - n))+ \frac{1}{2} \log(2\pi) +\bigO(\min(\lvert z \rvert, \lvert z - n \rvert)^{-1}),
\end{equation}
where
\begin{equation} \label{eq:tilde_F_intermidiate}
  \tilde{F}(z; a) = (1+\theta)z(\log z - 1) - (z - n)(\log(z - n) - 1) - z\log a.
\end{equation}
Furthermore, we have
\begin{equation} \label{eq:defn_hat_F}
  \tilde{F}(nz; n^\theta a) = n \hat{F}(z; a) + n\log n,
\end{equation}
where
\begin{equation} \label{eq:hat_F}
  \hat{F}(z; a) = (1+\theta)z(\log z - 1) - (z - 1)(\log(z - 1) - 1) - z\log a.
\end{equation}
Note that if $\theta=M\in\mathbb{N}$, we encounter the same $\tilde{F}(z; a)$ and $\hat{F}(z; a)$ as in \cite{LWZ14}.

Since
\begin{equation}
\hat{F}_{z}(z;x)=(1+\theta)\log z-\log(z-1)-\log x,
\end{equation}
the saddle point of $\hat{F}(z;x)$ satisfies the equation
\begin{equation}\label{eq:algequ}
z^{1+\theta}=x(z-1).
\end{equation}
In particular, if $x=x_0\in(0, (1 + \theta)^{1 + \theta}/\theta^\theta)$, which is parameterized through \eqref{eq:para x} by  $\varphi = \varphi(x_0) \in (0, \pi/(1+\theta))$, one can find two complex conjugate solutions of \eqref{eq:algequ} explicitly given by
\begin{equation} \label{eq:defn_w_pm}
  w_{\pm} = \frac{\sin((1+\theta)\varphi)}{\sin(\theta\varphi)} e^{\pm i\varphi}.
\end{equation}

For later use, we also define a closed contour
\begin{equation}
\tilde{\Sigma}=\left\{
z=\frac{ \sin ( (1+\theta) \phi) } { \sin( \theta\phi)  } \,  e^{i\phi}
~\Big{|}~ -\frac{\pi}{1+\theta} \leq \phi\leq \frac{\pi}{1+\theta}
\right\},
\end{equation}
which passes through $w_\pm$, intersects the real line only at $0$ when $\phi =  \pm \pi/(1 + \theta)$ and at $1 + \theta^{-1}$ when $\phi = 0$. Since the integrand of \eqref{eq:KinF} takes $0$ as one of the poles, we further deform $\tilde{\Sigma}$ a little bit near the origin simply by  setting
\begin{equation}
  \tilde{\Sigma}^{\epsilon} := \text{$\{ z \in \tilde{\Sigma} \mid \lvert z \rvert \geq \epsilon \}$} \cup \text{the arc of $\{ \lvert z \rvert = \epsilon \}$ connecting $\tilde{\Sigma} \cap \{ \lvert z \rvert = \epsilon \}$ and through $-\epsilon$},
\end{equation}
with counterclockwise orientation.

With the above preparations, we are ready to prove the bulk and soft edge universality for $K_n^{(\alpha,\theta)}$.

\paragraph{Proof of \eqref{eq:bulk univ}}
In view of \eqref{eq:bulk univ}, we scale the arguments $x$ and $y$ in \eqref{eq:KinF} such that
\begin{equation} \label{eq:defn_xy_bulk}
x = n^\theta \left(x_0 + \frac{ \xi}{n \rho(\varphi)}\right), \qquad y =n^\theta \left(x_0 + \frac{\eta}{n \rho(\varphi)}\right),
\end{equation}
where $\xi$ and $\eta$ are in a compact subset of $\realR$ and $\rho(\varphi)$ is given in \eqref{eq:density}.

The contours $\mathcal{C}$ and $\Sigma$ are chosen in the following ways. The contour $\mathcal{C}$ is simply taken to be an upward straight line passing through two scaled saddle points $n w_{\pm}$. This line then divides $n \tilde{\Sigma}^{r}$ into two parts, where $r$ is a small parameter depending on $\theta$. By further separating these two parts, we define
\begin{equation} \label{eq:Sigma_contour_bulk}
  \Sigma = \Sigma_{\curved} \cup \Sigma_{\vertical},
\end{equation}
where $\Sigma_{\curved}$ is the part from $n \tilde{\Sigma}^{r}$, and $\Sigma_{\vertical}$ are two vertical lines connecting ending points of $\Sigma_{\curved}$. The distance of these two lines is taken to be $2\epsilon$, with $\mathcal{C}$ lying in the middle of them; see Figure \ref{fig:Bulk} for an illustration. The main issue here is that, with these choices of $\mathcal{C}$ and $\Sigma$, $\Re \hat{F}(z; x_0)$ defined in \eqref{eq:hat_F} attains its global maximum at $z=w_{\pm}$ for $nz \in \mathcal{C}$ and its global minimum at $z = w_{\pm}$ for $z \in \tilde \Sigma$, which can be proved rigorously with estimates as shown in \cite[Lemma 3.1]{LWZ14}.
\begin{figure}[t]
\begin{center}
   \setlength{\unitlength}{1truemm}
   \begin{picture}(100,70)(-5,2)

       \qbezier(0,40)(0,45)(5,45)
       \qbezier(0,40)(0,35)(5,35)
       \qbezier(5,45)(8,59.5)(35,61)
       \qbezier(35,61)(40,60)(40,59)
       \qbezier(50,50)(60.5,40)(50,30)

       \qbezier(5,35)(8,21.5)(35,19)
       \qbezier(35,19)(40,20)(40,21)

        \put(40,21){\line(0,1){38}}

        \put(50,30){\line(0,1){20}}

        \put(5,40){\thicklines\circle*{0.5}}

        \put(50,40){\thicklines\vector(0,1){1}}

        \put(40,40){\thicklines\vector(0,1){1}}

        \put(45,40){\thicklines\vector(0,1){1}}

        \put(45,16){\line(0,1){48}}

        \put(46,64){$\mathcal{C}$}

        \put(51,40){$\Sigma_{\vertical}$}

        \put(51,50){$\Sigma_{\curved}$}

         \put(32,40){$\Sigma_{\vertical}$}

         \put(32,62){$\Sigma_{\curved}$}

\end{picture}
   \vspace{-17mm}
   \caption{The contours $\mathcal{C}$ and $\Sigma$ used in the proof of bulk universality.}
  \label{fig:Bulk}
\end{center}
\end{figure}
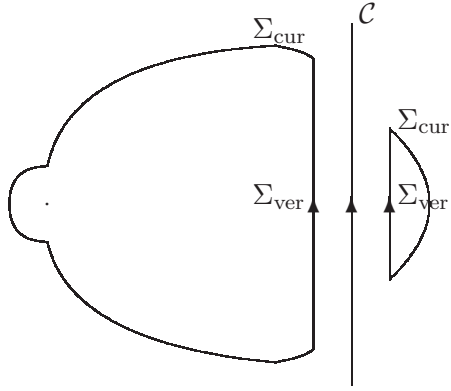

By taking the limit $\epsilon \to 0$, it follows
\begin{equation}\label{eq:spliKn}
  K_n^{(\alpha, \theta)}(\theta x^{\frac{1}{\theta}},\theta y^{\frac{1}{\theta}}) = I_1 + I_2,
\end{equation}
where ($\pv$ means the Cauchy principal value)
\begin{multline} \label{eq:formula_I_1}
  I_1 :=  \lim_{\epsilon \to 0} \frac{1}{(2\pi i)^2 x^{\frac{1}{\theta}}} \int_\mathcal{C} \ud s \int_{\Sigma_{\curved}} \ud t \frac{e^{F(s; x)}}{e^{F(t; y)}} \frac{1}{s - t} \\
  =  \frac{1}{(2\pi i)^2 x^{\frac{1}{\theta}}} \pv \int_{n\tilde{\Sigma}^r} \left( \int_\mathcal{C} \ud s \frac{e^{F(s; x)}}{e^{F(t; y)}} \frac{1}{s - t} \right)\ud t ,
  \end{multline}
  and, by interchange of integrals and the Cauchy's theorem,
  \begin{multline} \label{eq:formula_I_2}
      I_2 := \lim_{\epsilon \to 0} \frac{1}{(2\pi i)^2 x^{\frac{1}{\theta}}} \int_\mathcal{C} \ud s \int_{\Sigma_{\vertical}} \ud t \frac{e^{F(s; x)}}{e^{F(t; y)}} \frac{1}{s - t}
      = \frac{1}{2\pi i x^{\frac{1}{\theta}} } \int^{nw_+}_{nw_-} \frac{e^{F(s; x)}}{e^{F(s; y)}}\ud s \\
= \frac{1}{2\pi i  x^{\frac{1}{\theta}}} \int^{nw_+}_{nw_-} \left( \frac{y}{x} \right)^s \ud s
      =  \frac{1}{2\pi i x^{\frac{1}{\theta}} \log(\frac{y}{x})} \left( \left( \frac{y}{x} \right)^{nw_+} - \left( \frac{y}{x} \right)^{nw_-} \right).
\end{multline}
Here we note that by taking $\epsilon \to 0$, the vertical line $(nw_-,nw_+)$ is enclosed by $\Sigma_{\vertical}$, hence the Cauchy's theorem is applicable in the first step.

With the values of $x, y$ given in \eqref{eq:defn_xy_bulk} and $w_\pm$ given in \eqref{eq:defn_w_pm}, a straightforward calculation gives us
\begin{align} \label{eq:result_I_1}
 I_2 = {}& \frac{\rho(\varphi)x_0^{1-\frac{1}{\theta}}}{2\pi i (\eta - \xi)\left(1 + \bigO \left(n^{-1}\right)\right)} \left( e^{\frac{(\eta - \xi)w_+}{\rho(\varphi)x_0}} \left(1 + \bigO \left(n^{-1}\right)\right) - e^{\frac{(\eta - \xi)w_-}{\rho(\varphi)x_0}} \left(1 + \bigO \left(n^{-1}\right)\right) \right)  \nonumber
 \\
 = {}& \rho(\varphi) x_0^{1-\frac{1}{\theta}} \frac{e^{\pi \cot\varphi \eta}}{e^{\pi \cot\varphi \xi}}  \frac{\sin\pi(\xi - \eta)}{\pi(\xi - \eta)} + \bigO\left(n^{-1}\right)
 \end{align}

On the other hand, one can show that, in a manner similar to the estimates in \cite[Lemma 2.1]{LWZ14}, $F(z; n^\theta x_0)$  attains its global maximum at  $z=nw_{\pm}$ for $z \in \mathcal{C}$ and its global minimum at $z = n w_{\pm}$ for $z \in \tilde \Sigma$, which leads to the fact that $I_1(z)=\mathcal{O}(n^{-1/2})$. This, together with \eqref{eq:spliKn} and \eqref{eq:result_I_1}, implies \eqref{eq:bulk univ}.

\paragraph{Proof of \eqref{eq:edge univer}}
On account of the scalings of $x,y$ in \eqref{eq:edge univer}, we set
\begin{equation} \label{eq:scaling_of_xy_Airy}
x = n^\theta \left(x_\ast + \frac{c_\ast \xi}{n^{2/3}}\right), \qquad y = n^\theta \left(x_\ast + \frac{c_\ast \eta}{n^{2/3}}\right),
\end{equation}
where $\xi,\eta \in\mathbb{R}$, $x_\ast$ and $c_\ast$ are given in \eqref{def:cast}.

In this case, the two saddle points $w_\pm$ coalesce into a single one, i.e.,
\begin{equation}\label{def:z0}
w_+=w_-=z_0=1+\frac{1}{\theta}.
\end{equation}
We now select the contours $\Sigma$ and $\mathcal{C}$ as illustrated in Figure \ref{fig:SoftContour}.
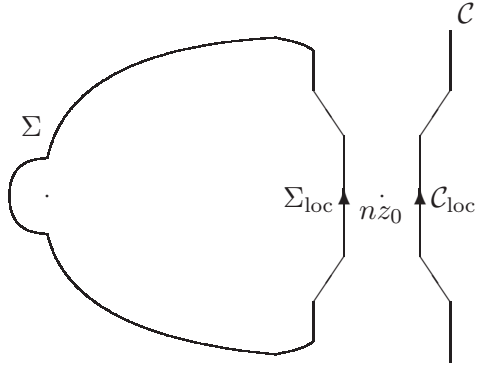
\begin{figure}[t]
\begin{center}
   \setlength{\unitlength}{1truemm}
   \begin{picture}(100,70)(-5,2)

       \qbezier(0,40)(0,45)(5,45)
       \qbezier(0,40)(0,35)(5,35)
       \qbezier(5,45)(8,59.5)(35,61)
       \qbezier(35,61)(40,60)(40,59)

       \qbezier(5,35)(8,21.5)(35,19)
       \qbezier(35,19)(40,20)(40,21)

        \put(40,21){\line(0,1){5}}

        \put(40,26){\line(2,3){4}}

        \put(44,32){\line(0,1){16}}

        \put(40,59){\line(0,-1){5}}

        \put(40,54){\line(2,-3){4}}

        \put(58,18){\line(0,1){8}}

        \put(58,26){\line(-2,3){4}}

        \put(54,32){\line(0,1){16}}

        \put(58,62){\line(0,-1){8}}

        \put(58,54){\line(-2,-3){4}}

        \put(5,40){\thicklines\circle*{0.5}}

        \put(49,40){\thicklines\circle*{0.5}}

        \put(44,40){\thicklines\vector(0,1){1}}

        \put(54,40){\thicklines\vector(0,1){1}}


        \put(59,63){$\mathcal{C}$}

        \put(1.5,48){$\Sigma$}

         \put(36,38.5){$\Sigma_{\loc}$}

         \put(55.5,38.5){$\mathcal{C}_{\loc}$}

         \put(46,37){$nz_0$}
   \end{picture}
   \vspace{-19mm}
   \caption{The contours $\mathcal{C}$ and $\Sigma$ used in the proof of soft edge universality.}
  \label{fig:SoftContour}
\end{center}
\end{figure}
The contour $\Sigma$ is still a deformation of $n\tilde{\Sigma}^{r}$, while near the scaled saddle point $nz_0$, the local part $\Sigma_{\local}$ is defined by
\begin{multline}\label{eq:sigmlocal}
    \Sigma_{\local} = \left\{ nz_0 + c_1 n^{\frac{2}{3}} re^{2\pi i/3} ~\Big{|}~ r \in \left[1, n^{\frac{1}{30}}\right] \right\} \cup \left\{ nz_0 + c_1 n^{\frac{2}{3}} re^{-2\pi i/3} ~\Big{|}~ r \in \left[1, n^{\frac{1}{30}}\right] \right\} \\
    \cup \left\{ nz_0 - \frac{c_1 n^{\frac{2}{3}}}{2} + ic_1 n^{\frac{2}{3}} r ~\Big{|}~ r \in \left [-\frac{\sqrt{3}}{2}, \frac{\sqrt{3}}{2}\right] \right\},
\end{multline}
where
\begin{equation}\label{def:c1}
c_1:=x_\ast/c_\ast=2^{\frac{1}{3}}(1+\theta)^{\frac{1}{3}}/\theta.
\end{equation}
The contour $\mathcal{C}$ is obtained by deforming a straight line. Around $nz_0$, the local part is defined by
\begin{multline}\label{eq:clocal}
  \mathcal{C}_{\local} = \left\{ nz_0 + c_1 n^{\frac{2}{3}} re^{\pi i/3} ~\Big{|}~ r \in \left[1, n^{\frac{1}{30}}\right] \right\} \cup \left\{ nz_0 + c_1 n^{\frac{2}{3}} re^{-\pi i/3} ~\Big{|}~ r \in \left[1, n^{\frac{1}{30}}\right] \right\} \\
  \cup \left\{ nz_0 + \frac{c_1 n^{\frac{2}{3}}}{2} + ic_1 n^{\frac{2}{3}} r ~\Big{|}~ r \in \left[-\frac{\sqrt{3}}{2}, \frac{\sqrt{3}}{2}\right] \right\}.
\end{multline}

As in \cite[Equation (2.69)]{LWZ14}, one can show that the main contribution to the integral \eqref{eq:KinF}, as $n\to\infty$, comes from the part $\mathcal{C}_{\loc} \times \Sigma_{\loc}$, and the remaining part of the integral is negligible. When $(s,t)\in \mathcal{C}_{\loc} \times \Sigma_{\loc}$, we can approximate $F(s; n^\theta x_\ast)$ and $F(t; n^\theta x_\ast)$ by $\tilde{F}$ given in \eqref{eq:F_in_F_tilde}, and further by $\hat{F}$ that is defined in \eqref{eq:hat_F}.

With $z_0$ given in \eqref{def:z0}, it is readily seen that
\begin{equation}
  \hat{F}_z(z_0; x_\ast) = 0, \quad \hat{F}_{zz}(z_0; x_\ast) = 0, \quad \hat{F}_{zzz}(z_0; x_\ast) = \frac{\theta^3}{1+\theta}.
\end{equation}
Hence,
\begin{align} \label{eq:F_hat_Taylor_expansion}
    & \hat{F}(z_0 + n^{-\frac{1}{3}}c_1u; x_\ast )  \nonumber \\
    = {}& \hat{F}(z_0; x_\ast) + \hat{F}_z(z_0; x_\ast)c_1u n^{-\frac{1}{3}} + \frac{1}{2} \hat{F}_{zz}(z_0; x_\ast)c^2_1u^2 n^{-\frac{2}{3}} + \frac{1}{6} \hat{F}_{zzz}(z_0; x_\ast)c^3_1u^3 n^{-1} + \bigO\left(n^{-\frac{6}{5}}\right)  \nonumber \\
    = {}& \hat{F}(z_0; x_\ast) + \frac{u^3}{3n} + \bigO\left(n^{-\frac{6}{5}}\right).
\end{align}
By changes of variables
\begin{equation} \label{eq:parametrization_Airy}
  s = nz_0 + n^{\frac{2}{3}} c_1 u, \qquad t = nz_0 + n^{\frac{2}{3}} c_1 v,
\end{equation}
it follows from \eqref{eq:KinF}, \eqref{eq:F_in_F_tilde}, \eqref{eq:scaling_of_xy_Airy}, \eqref{def:c1} and \eqref{eq:F_hat_Taylor_expansion} that
\begin{align} \label{eq:local_asy_analysis_Airy}
  & \frac{1}{(2\pi i)^2 x^{\frac{1}{\theta}}} \int_{\mathcal{C}_\loc} \ud s \oint_{\Sigma_\loc}  \ud t \frac{e^{F(s; x)}}{e^{F(t; y)}} \frac{1}{s - t} \nonumber \\
={}& \frac{1}{(2\pi i)^2 x^{\frac{1}{\theta}}} \int_{\mathcal{C}_{\loc}} \ud s \oint_{\Sigma_{\loc}}  \ud t \frac{e^{F(s; n^M x_\ast)}\left (1 + n^{-\frac{2}{3}} c^{-1}_1 \xi \right)^{-s}}{e^{F(t; n^M x_\ast)} \left(1 + n^{-\frac{2}{3}} c^{-1}_1 \eta \right)^{-t}} \frac{1}{s - t} \nonumber \\
   = {}& \frac{e^{2^{-\frac{1}{3}}(1+\theta)^{\frac{2}{3}} (\eta - \xi) n^{\frac{1}{3}}}c_1}{n^{\frac{1}{3}} x_\ast^{\frac{1}{\theta}} } \left( \frac{1}{(2\pi i)^2} \int_{\mathcal{C}_{0}} \ud u \int_{\Sigma_{0}} \ud v \frac{e^{ \frac{1}{3}u^3 - u\xi}}{e^{\frac{1}{3}v^3 - v\eta}} \frac{1}{u - v} + \bigO\left(n^{-\frac{1}{5}}\right) \right) \nonumber \\
    = {}& \frac{2^{\frac{1}{3}}e^{2^{-\frac{1}{3}}( 1 + \theta)^{\frac{2}{3}} (\eta - \xi) n^{\frac{1}{3}}}}{n^{ \frac{1}{3}}(1+\theta)^{\frac{1}{\theta}+\frac{2}{3}}} \left( K_{\Ai}(\xi, \eta) + \bigO\left(n^{-\frac{1}{5}}\right) \right),
 \end{align}
where  $\Sigma_0$ and $\mathcal{C}_0$ are the images of $\mathcal{C}_{\loc}$ and $\Sigma_{\loc}$ (see \eqref{eq:clocal} and \eqref{eq:sigmlocal}) under the change of variables \eqref{eq:parametrization_Airy}, and the last equality follows from the integral representation of Airy kernel shown in \eqref{def:airy kernel}.

This completes the proof of Theorem \ref{thm:bulk}.

\section{The cases when $\theta=M\in\mathbb{N}$}\label{sec:integercase}
In this section, we will show a remarkable connection between $K^{(\alpha,\theta)}_n$ and those arising from products of Ginibre random matrices if $\theta\in\mathbb{N}$. In the limiting case, this relation has been established in \cite{Kuijlaars-Stivigny14}. Our result gives new insights for the relations between these two different determinantal point processes. In particular, it provides the other perspective to explain the appearance of Fuss-Catalan distribution in biorthogonal Laguerre ensembles; see Remark \ref{remark1} below. We start with an introduction to the correlation kernels appearing in recent investigations of products of Ginibre matrices.

\subsection{Correlation kernels arising from products of $M$ Ginibre matrices}

Let $X_j$, $j=1,\ldots,M$ be independent complex matrices of size $(n+\nu_j)\times(n+\nu_{j-1})$ with $\nu_0=0$ and $\nu_j\geq 0$. Each matrix has independent and identically distributed standard complex Gaussian entries.
These matrices are also known as Ginibre random matrices. We then form the product
\begin{equation} \label{Ym}
Y_M = X_M X_{M-1} \cdots X_1.
\end{equation}

When $M=1$, $Y_1 = X_1$ defines the Wishart-Laguerre unitary ensemble and it is well-known that the squared singular values of $Y_1$ form a determinantal point process with the correlation kernel expressed in terms of Laguerre polynomials. Recent studies show that the determinantal structures still hold for general $M$ \cite{Akemann-Ipsen-Kieburg13,Akemann-Kieburg-Wei13}. According to \cite{Akemann-Ipsen-Kieburg13}, the joint probability density function of the squared singular values is given by (see \cite[formula (18)]{Akemann-Ipsen-Kieburg13})
\begin{equation} \label{jpdf}
    P(x_1, \ldots, x_n) =  \frac{1}{\mathcal{Z}_n}  \Delta(x_1,\ldots,x_n)\,
        \det \left[ w_{k-1}(x_j) \right]_{j,k=1, \ldots, n},
        \qquad x_j > 0,
    \end{equation}
where the function $w_k$ is a Meijer G-function
\begin{equation} \label{wk}
    w_k(x) = \mathop{{G^{{M,0}}_{{0,M}}}\/}\nolimits\!\left({- \atop \nu_M, \nu_{M-1},  \ldots, \nu_2, \nu_1 +
    k} \Big{|} x\right),
    \end{equation}
and the normalization constant (see \cite[formula (21)]{Akemann-Ipsen-Kieburg13}) is
\[ \mathcal{Z}_n = n!\prod_{i=1}^{n}\prod_{j=0}^M \Gamma(i+\nu_j).  \]
Note that the Meijer G-function $w_k(x)$ can be written as a Mellin-Barnes integral
\begin{equation} \label{wkasMB}
    w_k(x) =  \frac{1}{2\pi i} \int_{c-i\infty}^{c+i\infty} \Gamma(s+\nu_1 + k) \prod_{j=2}^{M} \Gamma(s+\nu_j)  x^{-s} \ud s,
    \qquad k=0, 1, \ldots,
\end{equation}
with $c > 0$. As a consequence of \eqref{eq:LaginMei}, it is readily seen that if $M=1$, \eqref{jpdf} is equivalent to \eqref{eq:bioLag} with $\theta=1$.

The determinantal point process \eqref{jpdf} again is a biorthogonal ensemble. Hence, one can write the correlation kernel as
\begin{equation} \label{def:Kn}
    K_n^{\vec{\nu}}(x,y) = \sum_{k=0}^{n-1} P_k^\nu(x) Q_k^\nu(y),
    \end{equation}
where $\nu$ stands for the collection of parameters $\nu_1,\ldots,\nu_M$ and the biorthogonal functions $P_k^\nu$ and $Q_k^\nu$ are defined as follows. For each $k = 0, 1, \ldots,n-1$, $P_k^\nu$ is a monic polynomial of degree $k$ and $Q_k^\nu$ can be a linear combination of $w_0, \ldots, w_k$, uniquely defined by the orthogonality
\begin{equation} \label{PkQkbio}
    \int_0^{\infty} P_j^\nu(x) Q_k^\nu(x) \ud x = \delta_{j,k}.
    \end{equation}
In particular, we have the following explicit formulas of $P_k^\nu$ and $Q_k^\nu$ in terms of Meijer G-functions \cite{Akemann-Ipsen-Kieburg13}:
\begin{align} \label{QkMeijerG}
  Q_k^\nu(x) &=
    \frac{1}{\prod_{j=0}^M \Gamma(k+\nu_j+1)}
    \mathop{{G^{{M+1,0}}_{{1,M+1}}}\/}\nolimits\!\left({-k \atop \nu_0, \nu_{1},  \ldots, \nu_M
    } \Big{|} x \right) \nonumber  \\
    &=
    \frac{1}{2\pi i \prod_{j=0}^M \Gamma(k+\nu_j+1)}
    \int_{c-i\infty}^{c+i\infty} \frac{\prod_{j=0}^M \Gamma(s+\nu_j)}{\Gamma(s-k)} x^{-s} \ud
    s
    \end{align}
and
\begin{align} \label{PnMeijerG}
  P_n^\nu(x) & =
    -\prod_{j=0}^M\Gamma(n+\nu_j+1)\mathop{{G^{{0,1}}_{{1,M+1}}}\/}\nolimits\!\left({n+1
\atop -\nu_0, -\nu_{1},  \ldots, -\nu_{M-1}, -\nu_{M}}\Big{|}
x\right)  \nonumber
\\
&= (-1)^n \prod_{j=1}^M \frac{\Gamma(n+\nu_j+1)}{\Gamma(\nu_j+1)}
    {\; }_1 F_M \left({-n \atop 1+ \nu_1, \ldots, 1+\nu_M} \Big{|} x \right),
    \end{align}
where
 \begin{equation}\label{def:hypergeo}
 {\; }_p F_q \left({a_1,\ldots, a_p \atop b_1,\ldots,b_q} \Big{|} z \right)=\sum_{k=0}^\infty \frac{(a_1)_k\cdots (a_p)_k}{(b_1)_k \cdots (b_q)_k}\frac{z^k}{k!}
\end{equation}
is the generalized hypergeometric function with
\begin{equation}\label{eq:pochammer}
(a)_k=\frac{\Gamma(a+k)}{\Gamma(a)}=a(a+1)\cdots(a+k-1)
\end{equation}
being the Pochhammer symbol; see \eqref{eq:HyperinMei} for the second equality in \eqref{PnMeijerG}. The polynomials $P_k^\nu$ can also be interpreted as multiple orthogonal polynomials \cite{Ismail09} with respect to the first $M$ weight functions $w_j$, $j=0,\ldots,M-1$, as shown in \cite{Kuijlaars-Zhang14}. More properties of these polynomials (or in special cases) can be found in \cite{CCVA,Kuijlaars-Zhang14,Neuschel14,VA14,VAY,ZP}.

With the aid of \eqref{QkMeijerG} and \eqref{PnMeijerG}, it is shown in \cite[Proposition 5.1]{Kuijlaars-Zhang14} that the correlation kernel admits the following double contour integral representation
\begin{equation} \label{Knintegral}
    K_n^\nu(x,y) =  \frac{1}{(2\pi i)^2} \int_{-1/2-i\infty}^{-1/2+i\infty} \ud s \oint_{\Sigma}  \ud t
        \prod_{j=0}^M   \frac{\Gamma(s+\nu_j+1)}{\Gamma(t+\nu_j+ 1)}
            \frac{\Gamma(t-n+1)}{\Gamma(s-n+1)}
        \frac{x^t y^{-s-1}}{s-t},
\end{equation}
where $\Sigma$ is a closed contour going around $0, 1, \ldots, n-1$ in the positive direction and  $\Re t > -1/2$ for $t \in \Sigma$. For recent progresses in the studies of products of random matrices; see \cite{Akemann-Ipsen15}.

We point out that the kernel \eqref{Knintegral} (as well as the biorthogonal functions $P_k^\nu$ and $Q_k^\nu$) is well-defined as along as $\nu_i>-1$, $i=1,\ldots,M$, and has a random matrix interpretation if $\nu_i$ are non-negative integers, i.e., then the particles correspond to the squared singular values of the matrix $Y_M$.

\subsection{Connections between $K_n^{(\alpha,M)}$ and $K_n^\nu$}
Our final result of this paper is stated as follows.
\begin{thm}[Relating $K_n^{(\alpha,M)}$ to $K_n^\nu$]\label{thm:relation}
Let $p_k^{(\alpha,\theta)}$, $q_k^{(\alpha,\theta)}$, $P_k^\nu$ and $Q_k^\nu$ be the functions defined through biorthogonalities \eqref{eq:bioOP} and \eqref{PkQkbio}, respectively. If $\theta=M\in\mathbb{N}$, we have
\begin{equation}\label{eq:relation1}
\begin{aligned}
q_k^{(\alpha,M)}(x)&=M^{kM}P_{k}^{\tilde \nu}\left(\frac{x}{M^M} \right),\\
 x^\alpha e^{-x} p_k^{(\alpha,M)}(x)&= Q_{k}^{\tilde \nu}\left(\frac{x^M}{M^M}\right)\frac{x^{M-1}}{M^{(k+1)M-1}},
\end{aligned}
\end{equation}
where the parameter $\tilde \nu$ is given by an arithmetic sequence
\begin{equation}\label{def:tildenu}
\tilde{\nu}_j=\frac{\alpha}{M}+\frac{j}{M}-1,\quad j=1,\ldots,M.
\end{equation}
As a consequence, we have
\begin{equation}\label{eq:kerrelation}
K_n^{(\alpha,M)}(x,y)=\frac{x^{M-1}}{M^{M-1}}K_n^{\tilde \nu}\left(\frac{y^M}{M^M},\frac{x^M}{M^M}\right)
\end{equation}
where $K_n^{(\alpha,\theta)}$ and $K_n^\nu$ are two correlation kernels defined in \eqref{eq:kerBio} and \eqref{def:Kn}, respectively.
\end{thm}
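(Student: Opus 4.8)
The plan is to prove the two biorthogonal-function identities \eqref{eq:relation1} first and then read off the kernel identity \eqref{eq:kerrelation}, the only real ingredient being Gauss's multiplication formula for the gamma function, $\Gamma(Mw)=(2\pi)^{(1-M)/2}M^{Mw-1/2}\prod_{\ell=0}^{M-1}\Gamma(w+\ell/M)$ (see \cite[formula 5.5.6]{DLMF}). Applied with $w=z+\frac{\alpha+1}{M}$ it gives $\Gamma(\alpha+1+Mz)=(2\pi)^{(1-M)/2}M^{Mz+\alpha+1/2}\prod_{j=1}^{M}\Gamma(z+\tilde{\nu}_j+1)$, and with $w=z-1+\frac{\alpha+1}{M}$ it gives $\Gamma(\alpha+Ms+1-M)=(2\pi)^{(1-M)/2}M^{Ms+\alpha+1/2-M}\prod_{j=1}^{M}\Gamma(s+\tilde{\nu}_j)$, with $\tilde{\nu}_j$ the arithmetic progression of \eqref{def:tildenu}. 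In words: the single ``Laguerre'' gamma factor $\Gamma(\alpha+1+Mz)$ becomes, up to elementary factors, precisely the $M$-fold product of gamma factors with parameters $\tilde{\nu}_1,\ldots,\tilde{\nu}_M$ that characterizes the $M$-Ginibre-product ensemble. Since $\tilde{\nu}_j>-1$ for every $j$ exactly because $\alpha>-1$, the objects $P_k^{\tilde{\nu}}$, $Q_k^{\tilde{\nu}}$, $K_n^{\tilde{\nu}}$ are well defined (generically with non-integer parameters, hence without a matrix realization, but with \eqref{QkMeijerG}, \eqref{PnMeijerG}, \eqref{Knintegral} still valid); throughout I write $\tilde{\nu}_0:=0$ to match the convention $\nu_0=0$ of Section~\ref{sec:integercase}.

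For the first identity in \eqref{eq:relation1}, both sides are finite sums over $j=0,\ldots,k$: the right-hand side of \eqref{eq:qexpli} (with $\theta=M$) already is one, and expanding the hypergeometric series in \eqref{PnMeijerG} via $(-k)_j/j!=(-1)^j\binom{k}{j}$ and $(1+\tilde{\nu}_i)_j=\Gamma(1+\tilde{\nu}_i+j)/\Gamma(1+\tilde{\nu}_i)$ puts $P_k^{\tilde{\nu}}$ in the same form. Comparing coefficients of $x^j$ then reduces to $\prod_{i=1}^{M}\Gamma(k+\tilde{\nu}_i+1)\big/\prod_{i=1}^{M}\Gamma(j+\tilde{\nu}_i+1)=M^{M(j-k)}\,\Gamma(\alpha+1+Mk)/\Gamma(\alpha+1+Mj)$, which is the multiplication formula applied to numerator and denominator; gathering the overall factor $M^{Mk}$ and the argument rescaling gives $q_k^{(\alpha,M)}(x)=M^{Mk}P_k^{\tilde{\nu}}(x/M^M)$. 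For the second identity I would start from the Mellin--Barnes representation \eqref{eq:pkint2} of $x^\alpha e^{-x}p_k^{(\alpha,M)}(x)$, replace $\Gamma(\alpha+Ms+1-M)$ in the integrand by its multiplication-formula expansion so that, together with the pre-existing $\Gamma(s)$, the numerator becomes $(2\pi)^{(1-M)/2}M^{Ms+\alpha+1/2-M}\prod_{j=0}^{M}\Gamma(s+\tilde{\nu}_j)$, and then set $X=x^M/M^M$ so that $M^{Ms}x^{-Ms}=X^{-s}$. What is left is exactly the Mellin--Barnes integral for $Q_k^{\tilde{\nu}}(X)$ from \eqref{QkMeijerG}; using the multiplication formula once more to express $\prod_{j=0}^{M}\Gamma(k+\tilde{\nu}_j+1)$ through $\Gamma(\alpha+1+Mk)$, all the $(2\pi)^{\pm(M-1)/2}$ factors, the $k!$, and $\Gamma(\alpha+1+Mk)$ cancel, and the surviving power of $M$ is exactly $-(M(k+1)-1)$, which is the second line of \eqref{eq:relation1}.

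The kernel identity \eqref{eq:kerrelation} is then immediate. Inserting \eqref{eq:relation1} into \eqref{eq:kerBio} with $\theta=M$ and writing $K_n^{(\alpha,M)}(x,y)=\sum_{k=0}^{n-1}\bigl(x^\alpha e^{-x}p_k^{(\alpha,M)}(x)\bigr)q_k^{(\alpha,M)}(y^M)$, the $k$-dependent factors $M^{-(M(k+1)-1)}$ and $M^{Mk}$ combine to the $k$-independent constant $M^{1-M}$, so pulling out $x^{M-1}/M^{M-1}$ leaves $\sum_{k=0}^{n-1}P_k^{\tilde{\nu}}(y^M/M^M)\,Q_k^{\tilde{\nu}}(x^M/M^M)$, which is $K_n^{\tilde{\nu}}(y^M/M^M,x^M/M^M)$ by \eqref{def:Kn}. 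Alternatively one can see \eqref{eq:kerrelation} straight from the double contour integrals: the multiplication formula turns the factor $\Gamma(s+1)\Gamma(\alpha+1+Ms)\big/\bigl(\Gamma(t+1)\Gamma(\alpha+1+Mt)\bigr)$ in the integrand of Theorem~\ref{thm:KnInt} (with $\theta=M$) into $M^{M(s-t)}\prod_{j=0}^{M}\Gamma(s+\tilde{\nu}_j+1)/\Gamma(t+\tilde{\nu}_j+1)$, and comparing with \eqref{Knintegral} at the arguments $(y^M/M^M,\,x^M/M^M)$ — with the $s$-contour taken along $\Re s=c$ of \eqref{eq:defc}, which is legitimate by the contour-deformation freedom available for both representations — yields the stated prefactor once the powers of $M$ and the factor $x^{M-1}$ are collected.

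There is no deep obstacle: the whole content is the observation that Gauss's multiplication formula converts the single Laguerre gamma factor into the $M$-fold Ginibre-product gamma factor with the arithmetic progression \eqref{def:tildenu} of parameters, and everything else is bookkeeping. The places that demand care are (i) tracking the $(2\pi)^{\pm(M-1)/2}$ and the numerous powers of $M$ and $M^M$ so that they cancel exactly and leave the clean prefactor $x^{M-1}/M^{M-1}$; (ii) the conventions — the extra index $\tilde{\nu}_0=0$ on the Ginibre side must be matched with the lone $\Gamma(s)$ (respectively $\Gamma(t+1)$) factor of the Laguerre integrands, and the interchange $x\leftrightarrow y$ in \eqref{eq:kerrelation} comes from the asymmetry of \eqref{eq:kerBio} (the weight $x^\alpha e^{-x}$ is attached to the $p$-variable) together with the change of variable $x\mapsto x^M$; and (iii) noting that the value $c$ of \eqref{eq:defc} is precisely what makes \eqref{eq:Kn_double_con} and \eqref{Knintegral} with $\nu=\tilde{\nu}$ simultaneously valid, since the nominal line $\Re s=-\tfrac12$ in \eqref{Knintegral} stops separating the poles of $\prod_j\Gamma(s+\tilde{\nu}_j+1)$ once $(\alpha+1)/M\le\tfrac12$.
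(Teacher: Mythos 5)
Your proposal is correct and follows essentially the same route as the paper: Gauss's multiplication formula converts the single Laguerre gamma factor $\Gamma(\alpha+1+Mz)$ into the $M$-fold product $\prod_{j=1}^M\Gamma(z+\tilde\nu_j+1)$ with the arithmetic progression \eqref{def:tildenu}, the first identity follows by matching coefficients of the finite sums \eqref{eq:qexpli} and \eqref{PnMeijerG}, the second by matching the Mellin--Barnes integrals \eqref{eq:pkint2} and \eqref{QkMeijerG}, and \eqref{eq:kerrelation} is then direct bookkeeping on the sums (with, as you note, the $k$-dependent powers of $M$ collapsing to the constant $M^{1-M}$). The only difference from the paper is cosmetic --- the paper expands $Q_k^{\tilde\nu}$ via the multiplication formula and recognizes \eqref{eq:pkint2}, while you expand \eqref{eq:pkint2} and recognize $Q_k^{\tilde\nu}$ --- and your side remarks (that $\tilde\nu_j>-1$ precisely because $\alpha>-1$, that $\tilde\nu_0=0$ matches the extra $\Gamma(s)$ and $\Gamma(t+1)$ factors, and that the $s$-contour in \eqref{Knintegral} must be shifted to the $c$ of \eqref{eq:defc} when $(\alpha+1)/M\le 1/2$) are accurate and worth spelling out, though the paper leaves them implicit.
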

\begin{proof}
Suppose now the parameters in $P_k^\nu$ are given by $\tilde \nu$ \eqref{def:tildenu}, we see from \eqref{PnMeijerG} that
\begin{align}\label{eq:Pmid}
&P_k^{\tilde \nu}(x) \nonumber \\
&=(-1)^k \prod_{j=1}^M \frac{\Gamma(k+\tilde\nu_j+1)}{\Gamma(\tilde\nu_j+1)}
    {\; }\sum_{i=0}^\infty \frac{(-k)_i}{(1+\tilde\nu_1)_i \cdots (1+\tilde\nu_M)_i}\frac{x^i}{i!} \nonumber \\
    &=(-1)^k \prod_{j=1}^M \Gamma\left(k+\frac{\alpha+j}{M}\right)
    {\; }\sum_{i=0}^k \binom{k}{i} \frac{(-x)^i}{\Gamma\left(i+\frac{\alpha+1}{M}\right) \cdots \Gamma\left(i+\frac{\alpha+M}{M}\right)},
\end{align}
where the second equality follows from the definition of Pochhammer symbol \eqref{eq:pochammer}.
In view of the Gauss's multiplication formula \cite[formula 5.5.6]{DLMF}
\begin{equation}\label{eq:Gauss}
\Gamma(nz)=(2\pi)^{(1-n)/2}n^{nz-(1/2)}\prod_{k=0}^{n-1}\Gamma\left(z+\frac{k}{n}\right)
\end{equation}
with $z=i+\frac{\alpha+1}{M}$ and $n=M$, we could further simplify \eqref{eq:Pmid} to get
\begin{align}\label{eq:Pmid2}
P_k^{\tilde \nu}=(-1)^k\sum_{i=0}^k \binom{k}{i} \frac{\Gamma(\alpha+1+kM)(-xM^M)^i}{\Gamma(\alpha+1+iM)M^{kM}}.
\end{align}
Combining \eqref{eq:Pmid2} with \eqref{eq:qexpli}, it is readily seen that
\begin{equation}\label{eq:ptoP}
 q_k^{(\alpha,M)}(x)=M^{kM}P_{k}^{\tilde \nu}\left(\frac{x}{M^M} \right),
 \end{equation}
which is the first identity in \eqref{eq:relation1}. Note that both $q_k^{(\alpha,M)}$ and $P_{k}^{\tilde \nu}$ are monic polynomials of degree $k$.

To show the second identity in \eqref{eq:relation1}, we obtain from \eqref{QkMeijerG} and \eqref{def:tildenu} that
\begin{align}\label{eq:Qk}
&Q_k^{\tilde\nu}\left(\frac{x^M}{M^M}\right) \nonumber \\
&=\frac{1}{2\pi i k! \prod_{j=1}^M \Gamma\left(k+\frac{\alpha+j}{M}\right)}
    \int_{c-i\infty}^{c+i\infty} \frac{\Gamma(s)}{\Gamma(s-k)}\prod_{j=1}^M \Gamma\left(s+\frac{\alpha}{M}-1+\frac{j}{M}\right) \left(\frac{x}{M}\right)^{-Ms} \ud s \nonumber \\
&= \frac{M^{(k+1)M}}{2\pi i k! \Gamma(\alpha+1+kM)} \int_{c-i\infty}^{c+i\infty} \frac{\Gamma(s)}{\Gamma(s-k)} \Gamma(Ms+1-M+\alpha) x^{-Ms} \ud s, \nonumber
\end{align}
where we have made use of \eqref{eq:Gauss} again in the second step. This, together with \eqref{eq:pkint2}, implies that
$$ Q_{k}^{\tilde \nu}\left(\frac{x^M}{M^M}\right)\frac{x^{M-1}}{M^{(k+1)M-1}}= x^\alpha e^{-x} p_k^{(\alpha,M)}(x),$$
as desired.

Finally, the relation \eqref{eq:kerrelation} follows immediately from a combination of \eqref{eq:kerBio}, \eqref{def:Kn} and \eqref{eq:relation1}. Alternatively, this relation can also be checked directly from the double contour integral representations \eqref{eq:Kn_double_con} and \eqref{Knintegral} with the help of multiplication formula \eqref{eq:Gauss}.

This completes the proof of Theorem \ref{thm:relation}.
\end{proof}

\begin{remark}\label{remark1}
By setting $x=y$ in \eqref{eq:kerrelation}, we simply have that $K_n^{(\alpha,M)}$ is related to $K_n^{\tilde \nu}$ via an $M$-th root transformation. Let $n\to\infty$, this in turn provides the other perspective to explain the appearance of Fuss-Catalan distribution in biorthogonal Laguerre ensembles, since it is well-known that the Fuss-Catalan distribution characterizes the limiting mean distribution for squared singular values of products of random matrices \cite{Alexeev-Gotze-Tikhomirov10,Banica-Belinschi-Capitaine-Collins11,Nica-Speicher06}. As a concrete example, we may focus on the case $\theta=M=2$. According to \cite{Kuijlaars-Zhang14,Zhang13}, the empirical measure for scaled squared singular values for the products of two Ginibre matrices converges
weakly and in moments to a probability measure over the real axis with density given by
\begin{equation}\label{eq:M2}
\frac{\sqrt{3}}{2^{\frac{4}{3}} \pi x^{\frac{2}{3}}}\left(\left(1+\sqrt{1-\frac{4x}{27}}\right)^{1/3}-\left(1-\sqrt{1-\frac{4x}{27}}\right)^{1/3}\right)
,\qquad  x\in\left(0,\frac{27}{4} \right).
\end{equation}
On the other hand, by \cite{LSZ} (see also \cite{CR14}), the limiting mean distribution for scaled particles from biorthogonal Laguerre ensembles \eqref{eq:bioLag} with $\theta=2$ takes the density function given by
\begin{equation}\label{eq:theta2}
\frac{\sqrt{3}}{2 \pi x^{\frac{1}{3}}}\left(\left(1+\sqrt{1-\frac{x^2}{27}}\right)^{1/3}-\left(1-\sqrt{1-\frac{x^2}{27}}\right)^{1/3}\right)
,\qquad  x\in\left( 0,3^{\frac{3}{2}} \right).
\end{equation}
Clearly, the density \eqref{eq:theta2} can be reduced to \eqref{eq:M2} via a change of variable $x\to 2\sqrt{x}$, as expected.                                                                                                                                           \end{remark}

\begin{remark}\label{remark2}
From \cite[Theorem 5.3]{Kuijlaars-Zhang14}, it follows that, with $K_n^\nu$ defined in \eqref{def:Kn} and $\nu_1, \ldots, \nu_M$ being fixed,
\begin{equation}
\lim_{n \to \infty}  \frac{1}{n} K_n^{\nu} \left(\frac{x}{n}, \frac{y}{n} \right) = K^{\nu}(x,y),
\end{equation}
uniformly for $x,y$ in compact subsets of the positive real axis, where
\begin{align}\label{def:K(x,y;m)} &K^{\nu} (x,y) \nonumber
\\
&=\frac{1}{(2\pi i)^2}
    \int_{-1/2-i\infty}^{-1/2+i\infty} \ud s \int_{\Sigma}  \ud t     \prod_{j=0}^M   \frac{\Gamma(s+\nu_j+1)}{\Gamma(t+\nu_j+ 1)}
        \frac{\sin \pi s}{\sin \pi t} \frac{x^t y^{-s-1}}{s-t}
   \nonumber  \\
&=\int_0^1  G^{1,0}_{0,M+1} \left(
\begin{array}{c} - \\-\nu_0, -\nu_1, \ldots, -\nu_M
\end{array}\Big{|} ux \right)
    G^{M,0}_{0,M+1} \left( \begin{array}{c} - \\  \nu_1, \ldots, \nu_M,\nu_0
\end{array}\Big{|}uy
    \right) \ud u,
\end{align}
and where $\Sigma$ is a contour starting from $+\infty$ in the upper half plane and returning to $+\infty$ in the lower half plane which
encircles the positive real axis and $\Re t>-1/2$ for $t\in\Sigma$. This fact, together with our relation \eqref{eq:kerrelation} and the hard edge scaling limits of Borodin \eqref{eq:BorHard}, implies that
\begin{equation}\label{eq:BorKZ}
M x^{\alpha}\int_0^1J_{\frac{\alpha+1}{M},\frac{1}{M}}(ux)J_{\alpha+1,M}(\left(uy\right)^{M})u^\alpha\ud u
 = \frac{x^{M-1}}{M^{M-1}}K^{\tilde \nu}\left(\frac{y^M}{M^M},\frac{x^M}{M^M}\right).
\end{equation}
The identity \eqref{eq:BorKZ} was first proved in \cite{Kuijlaars-Stivigny14}, where the authors gave a direct proof by noting that Wright generalized Bessel functions $J_{a,b}$ defined in \eqref{eq:Wright} can be
expressed in Meijer G-functions if $b$ is a rational number. Since it is easily seen from \eqref{def:K(x,y;m)} and the multiplication formula \eqref{eq:Gauss} that
\begin{equation}
\frac{x^{M-1}}{M^{M-1}}K^{\tilde \nu}\left(\frac{y^M}{M^M},\frac{x^M}{M^M}\right)=K^{(\alpha,M)}(x,y), \end{equation}
where $K^{(\alpha,M)}(x,y)$ is given in \eqref{eq:hardedgelim}, the proof presented in \cite{Kuijlaars-Stivigny14} also gives a direct proof of identity \eqref{eq:BorZhang} if $\theta=M\in\mathbb{N}$. To show \eqref{eq:BorZhang} for general $\theta\geq 1$, we first observe from \eqref{def:q}, the residue theorem and \eqref{eq:Wright} that
\begin{equation}
q^{(\alpha,\theta)}(x)=\sum_{k=0}^{\infty}\frac{(-1)^k}{k!}\frac{\theta x^{k\theta}}{\Gamma(\alpha+1+k\theta)}
=\theta J_{\alpha+1,\theta}(x^\theta).
\end{equation}
Similarly, by deforming the vertical line in \eqref{def:p} to be a loop starting from $-\infty$ in the lower
half plane and returning to $-\infty$ in the upper half plane which encircles the negative real axis, we again obtain from the residue theorem that
\begin{equation}
p^{(\alpha,\theta)}(x)=\sum_{k=0}^{\infty}\frac{(-1)^k}{k!}\frac{x^{\alpha+k}}{\Gamma\left(\frac{\alpha+1+k}{\theta}\right)}
=x^\alpha J_{\frac{\alpha+1}{\theta},\frac{1}{\theta}}(x).
\end{equation}
A combination of the above two formulas, \eqref{eq:hardedgelim} and \eqref{eq:hardedgelim2} gives us \eqref{eq:BorZhang}.
\end{remark}

\appendix

\section{The Meijer G-function}
For convenience of the readers, we give a brief introduction to the Meijer G-function in this
appendix, which includes its definition and some properties used in this paper.

By definition, the Meijer G-function is given by the
following contour integral in the complex plane:
\begin{equation}\label{def:Meijer}
G^{m,n}_{p,q}\left({a_1,\ldots,a_p \atop b_1,\ldots,b_q}\Big{|}
z\right)
=\frac{1}{2\pi i}\int_\gamma
\frac{\prod_{j=1}^m\Gamma(b_j+u)\prod_{j=1}^n\Gamma(1-a_j-u)}
{\prod_{j=m+1}^q\Gamma(1-b_j-u)\prod_{j=n+1}^p\Gamma(a_j+u)}z^{-u}
\ud u,
\end{equation}
where $\Gamma$ denotes the usual gamma function and the branch cut
of $z^{-u}$ is taken along the negative real axis. It is also
assumed that
\begin{itemize}
  \item $0\leq m\leq q$ and $0\leq n \leq p$, where $m,n,p$ and $q$
  are integer numbers;
  \item The real or complex parameters $a_1,\ldots,a_p$ and
  $b_1,\ldots,b_q$ satisfy the conditions
  \begin{equation*}
  a_k-b_j \neq 1,2,3, \ldots, \quad \textrm{for $k=1,2,\ldots,n$ and $j=1,2,\ldots,m$,}
  \end{equation*}
  i.e., none of the poles of $\Gamma(b_j+u)$, $j=1,2,\ldots,m$ coincides
  with any poles of $\Gamma(1-a_k-u)$, $k=1,2,\ldots,n$.
\end{itemize}
The contour $\gamma$ is chosen in such a way that all the poles of
$\Gamma(b_j+u)$, $j=1,\ldots,m$ are on the left of the path, while
all the poles of $\Gamma(1-a_k-u)$, $k=1,\ldots,n$ are on the right,
which is usually taken to go from $-i\infty$ to $i\infty$. For more details, we refer to
the references \cite{Luke,DLMF}.

Most of the known special functions can be viewed as special cases of the
Meijer G-functions. For instance, with the generalized hypergeometric function ${\; }_p F_q$ given in \eqref{def:hypergeo}, one has \cite[formula 16.18.1]{DLMF}
\begin{equation}\label{eq:HyperinMei}
\mathop{{{}_{p}F_{q}}\/}\nolimits\!\left({a_{1},\dots,a_{p}\atop b_{1},\dots,b%
_{q}}\Big{|}z\right)=\frac{\prod\limits_{k=1}^{q}\Gamma(b_k)}{\prod\limits_{k=1}^{p}\Gamma(a_k)}
G^{1,p}_{p,q+1}\left({1-a_1,\ldots,1-a_p \atop 0, 1-b_1,\ldots,1-b_q}\Big{|}
-z\right).
\end{equation}
This, together with the fact that
   \begin{equation}\label{eq:multiply}
   z^{\alpha}G^{m,n}_{p,q}\left({a_1,\ldots,a_p \atop b_1,\ldots,b_q}\Big{|}z\right)=G^{m,n}_{p,q}\left({a_1+\alpha,\ldots,a_p+\alpha \atop b_1+\alpha,\ldots,b_q+\alpha}\Big{|}z\right),
   \end{equation}
gives us
\begin{equation}\label{eq:LaginMei}
  x^\alpha e^{-x}=G^{1,0}_{0,1}\left({- \atop \alpha}\Big{|}x\right)=\frac{1}{2\pi i}\int_\gamma \Gamma(\alpha+s)x^{-s}\ud s.
  \end{equation}

\section*{Acknowledgment}
The author thanks Peter Forrester and Dong Wang for helpful communications and for providing me with an early copy of the preprint~\cite{Forrester-Wang15} on a related study of the Laguerre biorthogonal ensemble upon completion
of the present work. The author also thanks the anonymous referees for their careful reading and constructive suggestions.

This work is partially supported by The Program for Professor of Special Appointment (Eastern Scholar) at Shanghai Institutions of Higher Learning (No. SHH1411007) and by Grant EZH1411513 from Fudan University.




\end{document}